\documentclass[MPS,Times1COL]{WileyNJDv5} 

\articletype{Research Article}%

\received{Date Month Year}
\revised{Date Month Year}
\accepted{Date Month Year}
\journal{Journal}
\volume{00}
\copyyear{2023}
\startpage{1}

\usepackage{amsmath}
\usepackage{empheq} 
\usepackage{graphicx}

\usepackage{diagbox}
\usepackage{rotating}
\usepackage{tabularx}
\usepackage{underscore}
\usepackage{hyperref}
\hypersetup{hidelinks,
	colorlinks=true,
	allcolors=black,
	pdfstartview=Fit,
	breaklinks=true}

\usepackage[percent]{overpic}
\usepackage{amssymb}  
\usepackage{xcolor}
\usepackage{algorithm}
 \usepackage{algpseudocode}
\newcommand{\norm}[1]{\left\lVert #1 \right\rVert}

\newtheorem{ass}{Assumption}

\begin{document}

\title{Integrating Deep Learning and Contraction Theory for Robust Nonlinear State Estimation via Unsupervised Scientific Machine Learning}

\author[1,3]{Yasmine Marani}
\author[1,2]{Israel Filho}
\author[1]{Eric Feron}
\author[2]{Taous-Meriem Laleg-Kirati}

\authormark{Marani \textsc{et al.}}
\titlemark{Learning-based contraction nonlinear observer}

\address[1]{\orgdiv{Computer, Electrical and Mathematical Science and Engineering Division (CEMSE)}, \orgname{King Abdullah University of Science and Technology (KAUST)}, \orgaddress{\state{Thuwal}, \country{Saudi Arabia}}}

\address[2]{ \orgname{Université Paris-Saclay, Inria, CIAMS, Gif-sur-Yvette, 91190}, \orgaddress{\state{ Paris-Saclay}, \country{France}}}

\address[3]{ \orgname{Division of Decision and Control Systems, KTH Royal Institute of Technology, SE-100 44}, \orgaddress{\state{Stockholm}, \country{Sweden}}}

\corres{Division of Decision and Control Systems, KTH Royal Institute of Technology, SE-100 44, Stockholm, Sweden. \email{yasmine.marani@kaust.edu.sa, ymarani@kth.se}}


\abstract[Abstract]{A common way to design observers is to add a correction term to a copy of the system; however, designing the correction term for nonlinear systems remains a significant long-standing challenge. Contraction theory offers a unified approach to designing this correction term by solving a matrix partial differential inequality (MPDI) and identifying a contraction metric. However, solving the MPDI for both the correction term and the contraction metric is highly challenging, both analytically and numerically. Therefore, the aim of this paper is to propose a learning-based approach to determine both the observer's correction term and the contraction metric by incorporating the contraction requirements into the learning process. The proposed approach relies on a scientific machine learning formulation that embeds the contraction conditions into the training loss function.  The proposed approach is then extended to non-autonomous systems while keeping the correction term and contraction metric static to avoid generalization issues arising from time-dependent training. Computable bounds on the learning errors of the proposed observer are established as a function of the training residual and the sampling resolution. Furthermore, the robustness of the proposed observer to measurement noise and learning errors are established 
in an exponential input-to-state stability sense. Based on the robustness analysis, the present paper takes a further step by proposing a robust learning-based contraction nonlinear observer.
The proposed observers are evaluated in numerical simulations for different contraction rates and measurement noise levels.}

\keywords{Nonlinear observer, contraction theory, scientific machine learning, robust estimation, learning guarantees.  }


\maketitle



\section{Introduction}
\label{sec:intro}

\noindent
Nonlinear observer design is a core research area in control theory that continues to receive sustained interest. Although general and systematic methods for the state estimation of linear systems with global convergence guarantees are well-established in the literature \cite{Luenberger1964,Kalman1960}, the design of observers for nonlinear systems still lacks comparable generality and global convergence guarantees.\\
A wide range of nonlinear observers has been proposed in the literature,
including high-gain observers \cite{khalil2008}, immersion and invariance-based observers \cite{karagiannis2008invariant}, geometric
observers \cite{KRENER1983}, LMI-based observers \cite{Rajamani1998}, algebraic estimators \cite{Yasmine2023}, latent-space immersion methods
\cite{Kazantzis1997}, and the widely used Extended Kalman Filter (EKF) \cite{anderson1979}. Most of these, however, either apply only to specific
classes of nonlinearities or, when applicable to general systems, guarantee only local convergence. Contraction-based observer design has recently regained attention as an alternative that avoids both limitations, offering a unified
framework for a broad class of smooth nonlinear systems together with global
exponential convergence.\\
 Initiated more than a century ago by S. Banach in \cite{Banach1922}, contraction theory received significant attention for fixed-point problems, and only a few works considered control systems \cite{Demidovic1961,Krasovskii1963,boyd1994linear}. It was not until the late 90s that contraction theory for the analysis of dynamical systems was fully formalized in \cite{LOHMILLER1998683}. Unlike traditional methods, such as Lyapunov theory \cite{Nijmeijer1990, isidori1995nonlinear,  khalil2002nonlinear}, contraction theory does not establish the stability of nonlinear systems with respect to an equilibrium point or nominal trajectory. Instead, it determines whether any two trajectories of a given nonlinear system converge to each other. In other words, a system is contracting if its final behavior is independent of the initial condition, which is a key feature for observer design \cite{manchester2018contracting}. The reader is referred to \cite{TSUKAMOTO2021135}  and \cite{Bullo} for more comprehensive reviews of contraction theory. 
 A wide range of nonlinear controllers and observers based on contraction theory have been introduced in the literature, including those in \cite{slotine1996a, slotine1996b, lohmiller1997, LOHMILLER1998683, sanfelice2011convergence, dani2014observer, manchester2018contracting}. In particular, contraction theory offers a general framework for designing nonlinear observers with guaranteed global exponential convergence. The design approach relies on solving a matrix partial differential inequality (MPDI) to find the observer's correction term and the contraction metric. Nevertheless, despite the theoretical appeal of this approach, solving the MPDI poses a significant analytical and numerical challenge, which constrains the practical application and implementation of contraction-based observers with nonlinear correction terms \cite{PeterGiesl2023}. Therefore, the present paper aims to address this challenge by proposing a scientific learning-based approach to design the observer's correction term and find the contraction metric by promoting the MPDI in the learning process. \\
\noindent 
The paradigm of leveraging Artificial Neural Networks (ANNs) to approximate solutions of differential equations has gained significant traction. A prominent methodology within this domain is the Physics-Informed Neural Network (PINN) initially proposed by Raissi \textit{et al.}~\cite{raissi2019}. This approach embeds the governing physics laws, expressed as ordinary or partial differential equations (ODEs/PDEs), directly into the network's training objective as a regularization term. The use of automatic differentiation is central to this process, enabling the efficient computation of the derivatives required to enforce the physics constraints within the loss function.
The versatility of PINNs has led to their extensive application in modeling, parameter estimation, and observer design for nonlinear dynamical systems~\cite{zhai2023parameter, antonelo2024physics, ekeland2024physics}. These methods have proven effective by ensuring that the learned models are not only data-efficient but also physically consistent, achieving robust generalization. Notably, in the context of observer design, PINN-based frameworks have been developed for both discrete and continuous-time systems~\cite{alvarez2024nonlinear}, including variations for designing Kazantzis-Kravaris-Luenberger (KKL) observer~\cite{niazi2023learning, peralez2021deep, Yasmine2023kkl, MARANI2025}. A common feature in many of these applications is the explicit dependence of the neural network on time as an input variable, which is instrumental when learning system trajectories or time-varying parameters. In this work, we adopt a different perspective inspired by PINNs, which leads us to believe that we are in the realm of scientific machine learning. While our method does not explicitly use time as a network's input, it explores the inherent plasticity of neural network architectures. Specifically, we parameterize the contraction metric directly as a learnable component of the neural network, which is then learned during the training process to shape the observer's correction term.\\ 
\noindent
In our previous work \cite{MaraniIsrael2025}, we proposed an unsupervised
learning approach to learn the observer correction term for a fixed identity
(Euclidean) metric. In the present paper, we remove the conservatism of this
fixed choice by treating the contraction metric as an unknown symmetric positive definite matrix to be learned jointly with the correction term.
The main contributions of this paper are summarized as follows:
\begin{itemize}
  \item We propose an unsupervised scientific machine learning approach that simultaneously learns the observer's correction term and the
        contraction metric by embedding the contraction conditions directly into the
        training loss.
    \item We extend the proposed observer to non-autonomous systems using a static correction term and contraction metric, avoiding time-dependent PINNs that limit generalization beyond the training horizon.
    \item Derive computable bounds on the learning errors 
    \item Establish the robustness properties of the proposed observer in the presence of learning errors, process and measurement noise, yielding an E-ISS result under sufficient conditions of the learning error and the spectrum of the contraction metric. 
    \item Propose a robust learning-based contraction observer with disturbance rejection and noise attenuation properties by further optimizing the spectrum of the contraction metric.
  \item Validate the proposed observers on four academic nonlinear systems across a range of contraction rates and measurement noise levels.
\end{itemize}

\noindent
The rest of the paper is organized as follows. Contraction-based nonlinear observer for autonomous systems and the problem formulation are presented in Section \ref{sec:prelim}. The main results on the unsupervised learning-based contraction nonlinear observer with a learnable contraction metric are stated in section \ref{sec:pinnDes}. Afterwards, the proposed observer is extended in Section \ref{sec:non-auto} to non-autonomous systems, while keeping the correction term and the contraction metric static. Section \ref{sec:robustnessAnalysis} establishes computable bounds on the learning errors, followed by robustness properties to process and measurement noise in the presence of learning errors. Following the robustness analysis, a robust learning-based contraction observer is proposed in Section \ref{sec:robustnessAnalysis}. The performance and robustness of the proposed observers are assessed through numerical simulations in section \ref{sec:sim} for different levels of measurement noise and contraction rates. Finally, a summary of the contributions and future work directions are provided in Section \ref{sec:conclusion}. \\

\noindent
\textbf{Notation.} 
For a square matrix $M$, $\operatorname{He}\{M\}=\frac{1}{2}(M+M^T)$ is the Hermitian part of the matrix $M$, $\|M\|$ is its induced 2-norm, $\lambda(M)$ its eigenvalues, and $\kappa$. For a symmetric matrix $P$, $\lambda_{\min}(P)$ and $\lambda_{\max}(P)$, respectively, represent the smallest and largest eigenvalue of $P$. The class $\mathcal{C}^k$ is the class of $k$ continuously differentiable functions. A $\mathcal{KL}$ function $\rho(r,s):[0; +\infty)\times [0;+\infty) \rightarrow [0;+\infty)$ is a continuous function who is strictly increasing in $r$, with $\rho(0,.)=0$, and decreasing in $s$, such that for $s\rightarrow +\infty$, $\rho(., s)\rightarrow 0$. The Euclidean norm of a vector $u$ is denoted by $\norm{u}$, and its infinity norm is denoted by $\|u\|_{\infty}$. Let $\Omega$ be a non-empty compact set, $\mu(\Omega)$ is its Lebesgue measure. Let $\mathcal{D}$ be a finite set, $|\mathcal{D}|$ denotes its cardinality. Let $\lceil . \rceil$ denotes the ceil function.

\section{Preliminaries and Problem Statement}\label{sec:prelim}
\subsection{Background on contraction theory for nonlinear observer design}

Contraction theory offers a new perspective on the stability analysis of nonlinear systems. Unlike the classical Lyapunov theory, the stability of a given system through the lens of contraction theory is established independently of the knowledge of any equilibrium point or nominal trajectory. A given system is said to be \textit{contracting} if it \textit{"forgets"} its initial condition and all the trajectories of the solutions converge to each other \cite{LOHMILLER1998683}. All of which are considered to be desirable properties for observer design. In particular, contraction theory offers a universal approach to designing nonlinear observers \cite{slotine1996a,slotine1996b}. \\   
Consider the following autonomous nonlinear system.
\begin{equation}
    \label{model}
    \left\{\begin{array}{l}\dot{x}(t)=f(x)\\
    y(t)=h(x),
    \end{array}\right.
\end{equation}
where $x(t) \in \mathbb{R}^n$ is the state,  $y(t) \in \mathbb{R}^p$ is the output,  $f: \mathbb{R}^n  \rightarrow \mathbb{R}^n$ and $h: \mathbb{R}^n \rightarrow \mathbb{R}^p$ are smooth vector fields.\\
Consider the associated observer constructed by taking a copy of the dynamic and adding a nonlinear correction term: 
\begin{equation}
\label{observer}
\left\{\begin{array}{l}
\dot{\hat{x}}=f(\hat{x})+k(\hat{x}, y) \\
\hat{y}=h(\hat{x}),
\end{array}\right.
\end{equation}
where $\hat{x} \in \mathbb{R}^n$ is the estimated state,  $\hat{y} \in \mathbb{R}^p$ is the estimated output, and $k: \mathbb{R}^n \times \mathbb{R}^p \rightarrow \mathbb{R}^n $ is the observer's nonlinear correction term. \\
 Designing the nonlinear correction term to ensure the convergence of the estimation error to the origin is the main challenge in nonlinear observer design. Contraction theory offers a general approach to design a smooth nonlinear correction term, which is provided by the following theorem.   

\begin{theorem} \cite{BERNARD2022}
\label{th1}
Consider the smooth nonlinear system \eqref{model} with $f$ of at least class $\mathcal{C}^1$. If there exists a symmetric and positive definite matrix $P\in \mathbb{R}^{n\times n}$, a $\mathcal{C}^1$ function $k: \mathbb{R}^n \times \mathbb{R}^p \rightarrow \mathbb{R}^n $ and a real positive number $c$, such that 
\begin{equation}\label{contractth}
\left\{\begin{aligned}
& \operatorname{He}\left\{P\left[\frac{\partial f}{\partial \hat{x}}(\hat{x})+\frac{\partial k}{\partial \hat{x}}(\hat{x}, y)\right]\right\} \leqslant-c P
 \quad \forall(\hat{x}, y) \in \mathbb{R}^n \times \mathbb{R}^p\\
 &k(x, h(x))=0 \quad \forall x \in \mathbb{R}^n,  \\
\end{aligned} \right.
\end{equation}
Then the observer in \eqref{observer} is a globally exponentially stable observer for system \eqref{model}. 
\end{theorem}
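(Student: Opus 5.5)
The argument is the standard incremental one for contracting vector fields with a constant metric. Fix a trajectory $x(t)$ of \eqref{model} with output $y(t)=h(x(t))$, and view the observer \eqref{observer} as the time-varying system $\dot{\hat{x}}=F(\hat{x},t):=f(\hat{x})+k(\hat{x},y(t))$. By the second condition in \eqref{contractth}, $k(x(t),h(x(t)))=0$, so $x(t)$ is itself a solution of this time-varying system. Convergence of the estimation error therefore reduces to showing that any two solutions of $\dot z=F(z,t)$ converge exponentially to each other.

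To do this, take the error $e=\hat{x}-x$ and the quadratic function $V=e^TPe$. Write the increment of $F$ via the mean value theorem in integral form along the segment $s\mapsto x+se$:
\[
F(\hat{x},t)-F(x,t)=\int_0^1\left[\frac{\partial f}{\partial \hat{x}}(x+se)+\frac{\partial k}{\partial \hat{x}}(x+se,y(t))\right]ds\;e,
\]
which is legitimate because $f$ and $k$ are $\mathcal{C}^1$ in $\hat{x}$. Then $\dot V=2e^TP\,(F(\hat{x},t)-F(x,t))$. Writing $A(s)$ for the bracketed Jacobian, this becomes $\dot V=\int_0^1 2e^T\operatorname{He}\{PA(s)\}e\,ds$. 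The first inequality in \eqref{contractth} holds for every $(\hat{x},y)$, in particular at $(x+se,y(t))$. Hence $\dot V\le -2c\,e^TPe=-2cV$.

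By the comparison lemma, $V(t)\le e^{-2ct}V(0)$. Using $\lambda_{\min}(P)\|e\|^2\le V\le\lambda_{\max}(P)\|e\|^2$, this gives
\[
\|e(t)\|\le\sqrt{\lambda_{\max}(P)/\lambda_{\min}(P)}\,e^{-ct}\|e(0)\|.
\]
This is global exponential stability of the estimation error for all initial conditions. The constant metric is what makes the direct quadratic function work without a Riemannian-distance argument. I would also remark that forward completeness of $\hat{x}$ follows from the bound itself, provided $x(t)$ exists for all $t$.

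The only delicate step is the symmetrization inside the integral: $e^TPAe=e^T\operatorname{He}\{PA\}e$. The negative-definiteness must be used pointwise in $s$ with the same constant $P$, which is exactly why a state-independent metric avoids extra derivative terms of $P$. I expect no further obstacle, apart from ensuring that $y(t)$ is the true output, so that the inequality is applied at admissible points; this is covered because the inequality holds for all $(\hat{x},y)\in\mathbb{R}^n\times\mathbb{R}^p$.
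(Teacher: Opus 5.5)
Your proof is correct and follows the same route as the paper. The theorem itself is only cited from \cite{BERNARD2022}, but the identical argument is the proof of Theorem~\ref{thu} with $\Phi\equiv 0$: a constant-metric quadratic function in $e$, the integral mean-value form of the error dynamics made possible by $k(x,h(x))=0$, pointwise use of the MPDI, and the comparison lemma, giving rate $c$ and overshoot $\sqrt{\lambda_{\max}(P)/\lambda_{\min}(P)}$. Your version, which evaluates both Jacobians at the same point $x+se$, is slightly cleaner than the paper's, which evaluates the $f$- and $k$-Jacobians from opposite ends of the segment and so implicitly needs the substitution $\alpha\mapsto 1-\alpha$ before the MPDI can be applied pointwise.
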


\subsection{Problem Statement}
Solving the infinite-dimensional inequality for $k$ and $P$ is a very challenging problem both analytically and numerically. In fact, there are no numerical solvers, according to the best of our knowledge, that are equipped to solve matrix partial differential inequalities. Relying on the universal approximation principle and the power of neural networks to approximate complex nonlinear functions, as well as the automatic differentiation capabilities of physics-informed neural networks, we proposed in \cite{MaraniIsrael2025} an unsupervised physics-informed neural network to learn the correction term by fixing the contraction metric $P$ to identity. In the present paper, we aim to lift any restrictions that result from pre-fixing $P$,  by proposing a Physics-Informed Neural Networks (PINNs) that simultaneously learns the observer's correction term and the contraction metric. The proposed learning approach operates in an unsupervised manner, guided by the system's underlying dynamics and the contraction conditions of  Theorem \ref{th1}, rather than by pre-calculated trajectory data.\\
Since relying on learning approaches requires training on a compact set of interests, we consider in the remainder of the present paper autonomous nonlinear systems \eqref{model} satisfying the assumption below, which is a relaxation of the assumption used in \cite{MaraniIsrael2025}.
\begin{ass}\label{asslearning}  
The solutions of interest of system \eqref{model} initialized in $\mathcal{X}_0$ remain in a compact convex set $\mathcal{X}$, with $\mathcal{X}_0\subseteq\mathcal{X}$, and the associated outputs remain in $\mathcal{Y}$.
\end{ass}

\noindent
Note that the convexity of $\mathcal{X}$ is not a requirement for the training but for ensuring the exponential convergence of the observer in \eqref{observer} on $\mathcal{X}$ \cite{Bullo}. \\
Taking the above into consideration, we aim in this paper to solve the following 
\begin{equation}\label{eq:ideal1}
\left\{ \begin{aligned}
  \text{find} \quad & P = P^T \succ 0, \quad k \in \mathcal{C}^1 \\
  \text{s.t} \quad
  & \operatorname{He}\!\left\{P\!\left[
      \tfrac{\partial f}{\partial\hat{x}}(\hat{x})
      + \tfrac{\partial k}{\partial\hat{x}}(\hat{x},y)
    \right]\right\} + cP \preceq 0,
    \quad \forall(\hat{x}, y) \in \mathcal{X} \times \mathcal{Y},\\
  & k\!\left(x,\,h(x)\right) = 0,
    \quad \forall\,x \in \mathcal{X},\\
\end{aligned}\right.
\end{equation}

\section{Learning-based Contraction Observer for autonomous nonlinear systems}
\label{sec:pinnDes}

\begin{figure}
    \centering
    \includegraphics[width=0.7\linewidth]{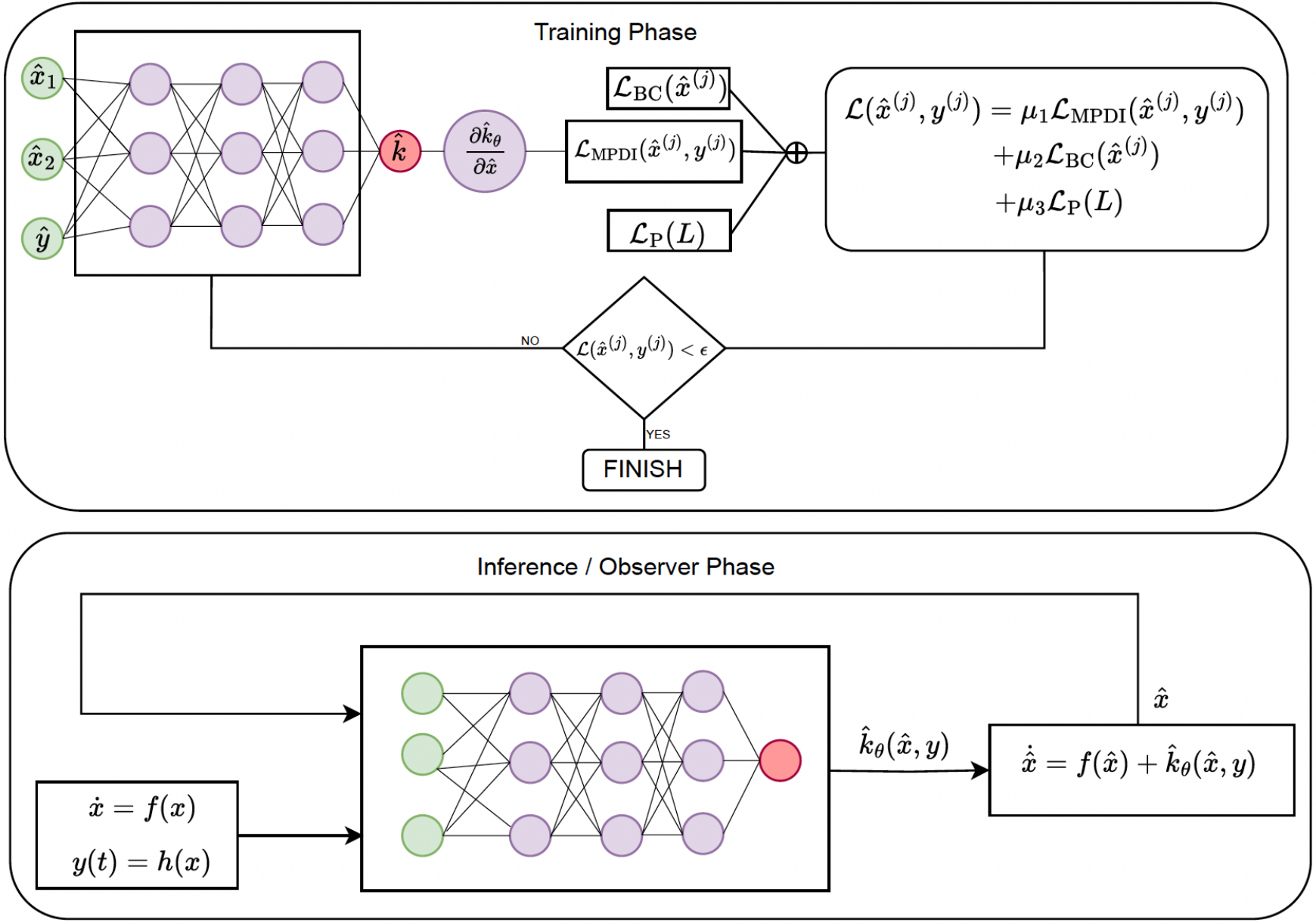}
    \caption{Block diagram of the proposed unsupervised learning-based contraction nonlinear observer design. The first block depicts the training phase that computes the nonlinear correction term $\hat{k}_{\theta}$ and the contraction metric $\hat{P}$. The second block is a high-level overview of the operation mode of the proposed observer.}
    \label{fig:systemVisual}
\end{figure}

\noindent
This section details the proposed learning framework for designing the learning-based contraction observer.
Advancing upon the research in \cite{MaraniIsrael2025}, we propose the simultaneous learning of both the observer's correction term and the contraction metric. In our proposed approach, the contraction metric is treated as a learnable parameter within the neural network, optimized concurrently with the network that defines the correction term, thus moving beyond the conservative assumption of a Euclidean metric ($P=I$).\\
\noindent
Solving the problem in \eqref{eq:ideal1} is relaxed into the following minimization problem
\begin{equation}
    \min_{\theta,L} \mathcal{L}(\theta, L)= \mu_1 \mathcal{L}_{\text{MPDI}}(\theta,L)
+ \mu_2 \mathcal{L}_{\text{BC}}(\theta)
+ \mu_3 \mathcal{L}_{P}(L)
\label{eqoptimize}
\end{equation}
where $\mathcal{L}_{\text{MPDI}}$, $\mathcal{L}_{\text{BC}}$ and $\mathcal{L}_{\text{P}}$ are the loss functions promoting the contraction conditions of Theorem \ref{th1}, and $\mu_1, \mu_2,$ $\mu_3$ are tunable coefficients balancing the influence of each component.  The parameters  $\theta$ and $L$ are the neural network's learnable parameters for the correction term and contraction metric, respectively.

\noindent Relying on the universal approximation theorem, we consider the following feedforward parametrization of the observer correction term $\hat{k}_\theta: \mathbb{R}^n \times \mathbb{R}^p \rightarrow \mathbb{R}^n$ solution to the optimization problem in \eqref{eqoptimize}:
\begin{equation}
\label{eq:mlpCompMath}
\hat{k}_{\theta}(\hat{x},y) = W_H Z_{H-1} \circ \dots \circ Z_1(Z_0) + b_H,
\end{equation}
where $Z_i(Z_{i-1})=\sigma(W_iZ_{i-1}+b_i)$, is the output of each layer $i$, for $i=1,...,H$, where $H$ is the number of layers, $Z_0=\text{vec}(\hat{x},y)$ the input of the neural network consisting of a vertical concatenation of $\hat{x}$ and $y$, and $\sigma$ is the activation function. The neural network's parameters tensor is denoted by $\theta=\{W_i, b_i\}_{i=1}^H$, where $W_i$ are weights matrices and $b_i$ are biases vectors of appropriate dimensions.\\
Furthermore, we propose to parametrize the contraction metric by a  Cholesky-like decomposition, 
\begin{equation}
    \hat{P} = L L^\top, 
    \label{eq:p}
\end{equation}
where $\hat{P}$ is the solution to the optimization problem in \eqref{eqoptimize} and $L\in\mathbb{R}^n$ is a lower triangular matrix whose elements are learnable parameters.
This parametrization ensures that the learned contraction metric is at least symmetric. Further conditions ensuring the positive definiteness of $\hat{P}$ will be discussed in the subsections below.\\

\noindent
As the training does not rely on measured data, the physics-based loss in \eqref{eqoptimize} is minimized over a set of $N$ collocation points $\mathcal{D} = \{\hat{x}^{[j]}, {y}^{[j]}\}_{j=1}^{N}$ constructed by uniformly sampling from the state and output sets of interests $(\mathcal{X} \times \mathcal{Y})$ following Assumption \ref{asslearning}. \\
The block diagram of the proposed learning-based nonlinear observer is depicted in FIGURE.~\ref{fig:systemVisual}, and details of the construction of the composite loss function in \eqref{eqoptimize} for promoting the contraction conditions of Theorem \ref{th1} are provided below.

\noindent

\subsubsection*{MPDI Loss ($\mathcal{L}_{\text{MPDI}}$)}
The primary loss term enforces the Matrix Partial Differential Inequality (MPDI) from the contraction conditions in \eqref{contractth}:
\begin{equation}
    \label{eq:contractionInequalityV1}
    D(\hat{x},y) \triangleq \operatorname{He}\left\{ \hat{P} \left[\frac{\partial f}{\partial \hat{x}}(\hat{x})+\frac{\partial \hat{k}_\theta}{\partial \hat{x}}(\hat{x}, y) \right]\right\} + c \hat{P} \preceq 0.
\end{equation}
To penalize violations of this negative semi-definiteness condition, we propose the following loss function
\begin{equation}
\label{eq:minorsV1}
\mathcal{L}_{\text{MPDI}} = \frac{1}{N} \sum_{j=1}^{N}\sum_{k=1}^{n}\frac{1}{\beta} \ln\left(1 + \exp{\left(\beta \lambda_i \left(D(\hat{x}^{[j]},{y}^{[j]})\right)\right)}\right) ,  
\end{equation}

\noindent
where $\lambda_i$ is the $i$-th eigenvalue of $D(\hat{x}^{[j]},{y}^{[j]})$, for $i=1,\ldots,n$, and $\beta$ is a large positive constant.
Unlike the loss in \cite{MaraniIsrael2025}, we replace the ReLU function in \eqref{eq:minorsV1} with a smooth approximation, thereby avoiding the issues arising from its non-differentiability at zero.
To this end, we adopt the Softplus function, $\operatorname{Softplus}(\cdot)\triangleq \ln\left(1 + \exp(\cdot)\right)$, a widely used activation function in deep neural networks \cite{zheng2015improving}.
Indeed, as $\beta \rightarrow +\infty$, we have $\frac{1}{\beta}\operatorname{Softplus}(\beta a) \rightarrow \operatorname{ReLU}(a)$.
  
\subsubsection*{Boundary Condition Loss ($\mathcal{L}_{\text{BC}}$)}
This component enforces the logical constraint that the observer correction term $\hat{k}_\theta$ should vanish when the estimated output aligns with the system's output (i.e., when $y = h(\hat{x})$). This is achieved by penalizing the correction term's norm under this condition:
\begin{equation}
\mathcal{L}_{\text{BC}} = \frac{1}{N} \sum_{j=1}^{N} \left\|\hat{k}_\theta(\hat{x}^{[j]},h(\hat{x}^{[j]}))\right\|^2.  
\label{eq:eqBCV1}
\end{equation}

\subsubsection*{Contraction Metric Regularization Loss ($\mathcal{L}_P$)}
Exploiting the Cholesky-like decomposition of the contraction metric, the positive definiteness of $\hat{P}$ is obtained if and only if the diagonal entries of $L$ are non-zero.
To ensure that $L$ is well-behaved and non-singular, we add a regularization term that penalizes trivial diagonal elements
\begin{equation}
\label{eq:LdiagLossV1}
\mathcal{L}_{P}=\sum_{k=1}^{n}\frac{1}{l_{ii}^2+\epsilon
},
\end{equation}
where $l_{ii}$ are the diagonal elements of the learnable matrix $L$, for $i=1,...,n$, and $\epsilon$ is a small constant added to avoid division by zero. \\

\noindent
The training pseudo-code for the proposed learning-based contraction nonlinear observer is provided in Algorithm 1.\\

\begin{algorithm}
\label{Algo}
\caption{Training Algorithm of the learning-based Contraction Nonlinear Observer with Learnable contraction Metric}
\begin{algorithmic}[1]
\State \textbf{Inputs:} dataset $\mathcal{D}$; initial parameters $\Theta=(\theta,L)$; nonlinear function $f$ and $h$; contraction rate $c>0$; Adam learning rate $\alpha$;  epochs $N_{\text{Adam}},N_{\text{LBFGS}}$; loss weights $(\mu_1,\mu_2,\mu_3)$;  regularization constant $\epsilon$; softplus sharpness $\beta>0$; batch size $|\mathcal{B}|$.
\For{$q=1,\dots,N_{\text{Adam}}$} \Comment{Adam phase}
  \For{mini-batch $\mathcal{B}\subset\mathcal{D}$}
    \State $P \gets L L^\top$
    \State Compute $\hat{k}_\theta(\hat{x},y)$ on $\mathcal{B}$
    \State Compute the jacobians $\big\{\frac{\partial f}{\partial \hat{x}}(\hat{x}),\,\frac{\partial \hat{k}_\theta}{\partial \hat{x}}(\hat{x},y)\big\}$ on $\mathcal{B}$
    \State Build $D(\hat{x},y) = \operatorname{He}\!\Big\{ P\big[\tfrac{\partial f}{\partial \hat{x}}+\tfrac{\partial \hat{k}_\theta}{\partial \hat{x}}\big]\Big\} + cP$
    \State Compute the eigenvalues $\{\lambda_i(D(\hat{x}^{[j]},y^{[j]}))\}_{i=1}^{n}$ on $\mathcal{B}$
    \State $\mathcal{L}_{\text{MPDI}} \gets \frac{1}{|\mathcal{B}|}\sum_{j}\sum_{i=1}^{n}\frac{1}{\beta}\ln\!\big(1+\exp(\beta\,\lambda_i(D(\hat{x}^{[j]},y^{[j]})))\big)$ \Comment{Eq.~\eqref{eq:minorsV1}}
    \State $\mathcal{L}_{\text{BC}} \gets \frac{1}{|\mathcal{B}|}\sum_{j}\|\hat{k}_\theta(\hat{x}^{[j]},h(\hat{x}^{[j]}))\|^2$ \Comment{Eq.~\eqref{eq:eqBCV1}}
    \State $\mathcal{L}_{P} \gets \sum_{k=1}^{n}\frac{1}{l_{ii}^2+\epsilon}$ \Comment{Eq.~\eqref{eq:LdiagLossV1}}
    \State $\mathcal{L}(\theta,L) \gets \mu_1\mathcal{L}_{\text{MPDI}}+\mu_2\mathcal{L}_{\text{BC}}+\mu_3\mathcal{L}_{P}$ \Comment{Eq.~\eqref{eqoptimize}}
    \State $\Theta \gets \Theta - \alpha\,\nabla_{\Theta}\mathcal{L}(\theta,L)$ \Comment{jointly update $\theta$ and $L$}
  \EndFor
\EndFor
\If{$N_{\text{LBFGS}}>0$} \Comment{optional refinement}
  \For{$q=1,\dots,N_{\text{LBFGS}}$}
    \State $P \gets L L^\top$
    \State Recompute $\mathcal{L}(\theta,L)$ via Eq.~\eqref{eqoptimize} (full data or large batch)
    \State $\Theta \gets \text{LBFGS\_step}\big(\Theta,\nabla_{\Theta}\mathcal{L}\big)$
  \EndFor
\EndIf
\State \textbf{Output:} optimized parameters $\Theta^\star=(\theta^\star,L^\star)$.
\end{algorithmic}
\end{algorithm}

\section{Learning-based Contraction Observer for non-autonomous nonlinear systems with static correction term and contraction metric}
\label{sec:non-auto}
In the present section, we consider a non-autonomous  nonlinear system with 
\begin{equation}
    \label{modelinput}
    \left\{\begin{array}{l}\dot{x}(t)=f(x,u)\\
    y(t)=h(x),
    \end{array}\right.
\end{equation}
where $u(t)\in \mathcal{U}\subset\mathbb{R}^m$ is a time-varying input.\\
Ideally, designing a contraction nonlinear observer for \eqref{modelinput} requires solving \eqref{contractth} for the time-varying correction term and contraction metric. However, this setup is not ideal from a learning and implementation point of view. Indeed, having the time as an input for the neural network raises several concerns:
\begin{enumerate}[(i).]
    \item Training requires sampling from a fixed time horizon $[0; T]$, which affects the generalization of the PINN for $t> T$.
    \item The resulting learning-based contraction observer cannot be implemented at a finer time resolution than that of the time grid used during training.
\end{enumerate}
In order to bypass these limitations, we aim in this section to design a learning-based contraction nonlinear observer for system \eqref{modelinput} using a static correction term and contraction metric.\\
Notice that the non-autonomous system in \eqref{modelinput} is equivalent to its autonomous counterpart in \eqref{model} for a constant input $u(t)=\bar{u}=cte$. Based on this observation, we provide conditions under which the following observer 
\begin{equation}
\label{observeru}
\left\{\begin{array}{l}
\dot{\hat{x}}=f(\hat{x},u)+k(\hat{x}, y)\\
\hat{y}=h(\hat{x}).
\end{array}\right.
\end{equation}
with a correction term learned from the autonomous version of \eqref{modelinput} for $u=\bar{u}=$cte provides an exponential convergence. To this end, let us define the following nonlinear function 
\begin{equation}
    \Phi(x,u) \triangleq f(x,u)-f(x,\bar{u}),
    \label{phi}
\end{equation}
and consider the assumptions below.
\begin{ass}\label{assu}
For any  input of interest $u \in \mathcal{U}$, the solutions of interests of non-autonomous system \eqref{modelinput} initialized in $\mathcal{X}_0$, remain in the compact convex set $\mathcal{X}$, with $\mathcal{X}_0\subseteq\mathcal{X}$ and $ h(X(t,x_0,u(t)) \in \mathcal{Y}$, where $X(t,x_0,u(t))$ is the solution of \eqref{modelinput} initialized at $x_0$ and the sets $\mathcal{X}_0, \mathcal{X}, \mathcal{Y}$ are as in Assumption \ref{asslearning}.
\end{ass}
\begin{ass}
    The nonlinear function $\Phi$ is Lipschitz in $x$ uniformly in $u$, i.e.,  $\norm{ \Phi(x_1, u)-\Phi(x_2, u)}\leq  l_{\Phi}\norm{x_1-x_2},  \forall u\in \mathcal{U}.$
, with $l_{\Phi}>0$.
\label{lipPhi}
\end{ass}
\begin{remark}
Assumption~\ref{lipPhi} is a mild condition. A sufficient condition for it to hold is that $f(\cdot, u)$ is Lipschitz continuous in its first argument, uniformly in $u$, which is a standard assumption for guaranteeing the uniqueness of solutions. It is worth noting, however, that Lipschitz continuity of $f$ is not necessary for Assumption~\ref{lipPhi} to be satisfied.
\end{remark}
\begin{theorem}\label{thu}
Consider system \eqref{modelinput} satisfying Assumption \ref{assu}, and the observer in \eqref{observeru}  with a static correction term $k$ and contraction metric $P$ satisfying 
\begin{equation}\label{eq:idealu}
\left\{ \begin{aligned}
  & \operatorname{He}\!\left\{P\!\left[
      \tfrac{\partial f}{\partial\hat{x}}(\hat{x}, \bar{u})
      + \tfrac{\partial k}{\partial\hat{x}}(\hat{x},y)
    \right]\right\} + cP \preceq 0,
    \quad \forall(\hat{x}, y) \in \mathcal{X} \times \mathcal{Y},\\
  & k\!\left(x,\,h(x)\right) = 0,
    \quad \forall\,x \in \mathcal{X},\\
\end{aligned}\right.
\end{equation}
for a contraction rate $c>0$. Furthermore, let Assumption \ref{lipPhi} hold. If $l_{\Phi}<\frac{c \lambda_{min}(P)}{\lambda_{max}(P)}$, then the estimation error convergences exponentially and satisfies  
\begin{equation}
\|x(t)-\hat{x}(t)\| \leqslant  \eta _0\|x(0)-\hat{x}(0)\| \exp{\left(-\left(c-l_{\Phi}\frac{\lambda_{max}(P)}{\lambda_{min}(P)}\right) t\right)}.\\
\label{estimationerroru}
\end{equation}
\noindent
\end{theorem}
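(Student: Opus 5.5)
We view the true state as a trajectory of the observer dynamics perturbed by $\Phi$, and compare it with $\hat{x}$ through a quadratic Lyapunov function built from the constant metric $P$. Write the true dynamics as $\dot{x}=f(x,\bar{u})+k(x,h(x))+\Phi(x,u)$, which holds because $k(x,h(x))=0$ on $\mathcal{X}$ by the second condition in \eqref{eq:idealu}. The observer is $\dot{\hat{x}}=f(\hat{x},\bar{u})+k(\hat{x},y)+\Phi(\hat{x},u)$, with $y=h(x)$. Define the error $e=\hat{x}-x$ and the candidate $V(e)=e^\top P e$.

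\textbf{Step 1 (contraction of the nominal part).} Fix $t$ and $y=y(t)$, and set $F(z)\triangleq f(z,\bar{u})+k(z,y)$. Since $\mathcal{X}$ is convex (Assumption \ref{assu}), the segment $z(s)=x+s e$, $s\in[0,1]$, stays in $\mathcal{X}$. The mean value theorem in integral form gives
\[
F(\hat{x})-F(x)=\int_0^1 \frac{\partial F}{\partial z}(z(s))\,ds\; e .
\]
Then the first inequality in \eqref{eq:idealu} yields
\[
2e^\top P\big(F(\hat{x})-F(x)\big)=\int_0^1 2e^\top \operatorname{He}\Big\{P\tfrac{\partial F}{\partial z}(z(s))\Big\}e\,ds\le -2c\,e^\top P e .
\]
Note that $F(x)=f(x,\bar u)$, because $k(x,h(x))=0$. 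The MPDI is only needed at points $(z(s),y)$ with $z(s)\in\mathcal{X}$ and $y\in\mathcal{Y}$, so Assumption \ref{assu} supplies exactly what is needed, provided $\hat{x}(t)\in\mathcal{X}$.

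\textbf{Step 2 (perturbation).} Using Assumption \ref{lipPhi},
\[
2e^\top P(\Phi(\hat{x},u)-\Phi(x,u))\le 2\lambda_{\max}(P)\,l_\Phi\|e\|^2\le 2l_\Phi\frac{\lambda_{\max}(P)}{\lambda_{\min}(P)}V .
\]
Combining with Step 1 gives $\dot V\le -2\big(c-l_\Phi\lambda_{\max}(P)/\lambda_{\min}(P)\big)V$. The hypothesis $l_\Phi<c\lambda_{\min}(P)/\lambda_{\max}(P)$ makes this rate positive.

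\textbf{Step 3 (conclusion).} By the comparison lemma, $V(t)\le V(0)e^{-2(c-l_\Phi\lambda_{\max}/\lambda_{\min})t}$. The sandwich bound $\lambda_{\min}(P)\|e\|^2\le V\le\lambda_{\max}(P)\|e\|^2$ then gives \eqref{estimationerroru} with $\eta_0=\sqrt{\lambda_{\max}(P)/\lambda_{\min}(P)}$.

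\textbf{Main obstacle.} The delicate step is the domain issue in Step 1. The MPDI holds only on $\mathcal{X}\times\mathcal{Y}$, so we need $\hat{x}(t)\in\mathcal{X}$ for the segment argument to apply. My first attempt is to take this as implicit in the statement: the estimates of interest lie in $\mathcal{X}$, as in the compact-set setting of Assumption \ref{asslearning}. Failing that, I would restrict to initial errors small enough that the sublevel set of $V$, which is nonincreasing, keeps $\hat{x}$ inside $\mathcal{X}$. The remaining steps are routine.
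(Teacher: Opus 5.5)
Your proposal is correct and follows essentially the paper's own argument. Both use a quadratic Lyapunov function in the constant metric $P$, the integral mean-value formula along the segment between $x$ and $\hat{x}$ (via convexity of $\mathcal{X}$ and $k(x,h(x))=0$) to invoke the MPDI at $\bar{u}$, the Lipschitz bound on $\Phi$ for the residual term, and the comparison lemma, which gives $\eta_0=\sqrt{\lambda_{\max}(P)/\lambda_{\min}(P)}$.

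The paper also tacitly assumes $\hat{x}(t)\in\mathcal{X}$, so your first resolution matches it. Your fallback would need one more hypothesis: $x(t)$ must stay a positive distance from $\partial\mathcal{X}$. A nonincreasing sublevel set of $V$ in the error coordinates does not by itself keep $x+e$ inside $\mathcal{X}$ when $x(t)$ touches the boundary.
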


\begin{proof}
Consider the estimation error $e=x-\hat{x}$
and the quadratic Lyapunov function $V=\frac{1}{2}e^TPe$. Taking the time derivative yields
\begin{align}
    \dot{V}&=e^TP\left\{f(x,u)-f(\hat{x},u) -k(\hat{x},y)  \right\} \notag\\
    & = e^TP\left\{f(x,\bar{u})-f(\hat{x},\bar{u}) -k(\hat{x},y)  \right\} + e^TP\left\{\Phi(x,u)-\Phi(\hat{x},u)\right\}\notag
\end{align}
Considering Assumption \ref{assu} with $k\!\left(x,\,y=h(x)\right) = 0$ for all $x \in \mathcal{X}$, with $\mathcal{X}$ convex, then by using the fundamental theorem of calculus as in \cite{Bullo, BERNARD2022} one gets\\
$$\dot{V}=e^TP \left( \int_0^1 \frac{\partial f}{\partial x}(\hat{x} + \alpha(x - \hat{x}), \bar{u}) + \frac{\partial k}{\partial x}(x + \alpha(\hat{x} - x), y) \, d\alpha \right) \times e+ e^TP\left\{\Phi(x,u)-\Phi(\hat{x},u)\right\}$$
The contraction inequality in \eqref{eq:idealu} and Assumption \ref{lipPhi} yield
\begin{align}
\dot{V} &\leq -ce^TPe + l_{\Phi} \lambda_{max}(P) e^Te \notag \\
& \leq -2\left( c-l_{\Phi} \frac{\lambda_{max}(P)}{\lambda_{min}(P)}\right) V
\end{align}
Finally, if $l_{\Phi}<\frac{c \lambda_{min}(P)}{\lambda_{max}(P)}$, then the estimation error converges exponentially to the origin and satisfies \eqref{estimationerroru}.
\end{proof}

\section{Robustness analysis and disturbance rejection of the PINN-based nonlinear contraction observer to learning errors }
\label{sec:robustnessAnalysis}

\subsection{Learning error bound guarantees}
Solving problem \eqref{eq:ideal1} does not lead to a unique solution due to the contraction inequality, i.e.,  any correction term and contraction metric that satisfy this inequality are equally optimal. However, one should distinguish the solutions of \eqref{eq:ideal1} from  the minimizers of the loss in \eqref{eqoptimize} for the following reasons:
\begin{enumerate}[(i).]
\item Unlike the problem in \eqref{eq:ideal1}, the loss function is minimized over a finite cardinality dataset $\mathcal{D}\subset \mathcal{X}\times \mathcal{Y}$.\\
\item In practice, training a neural network does not drive the loss to zero. Consequently, the learned correction term $\hat{k}_{\theta}$ and contraction metric $\hat{P}$ satisfy the contraction conditions up to a small residual error.
\end{enumerate}
In the present subsection, we aim to provide deterministic and computable learning generalization error bounds that will be used in further robustness analysis in the upcoming subsection. If the reader is interested in probabilistic bounds, one can refer to the work in \cite{mohri2018foundations}. \\
In line with this, we consider a validation dataset $\mathcal{V}$ constructed by an equidistant sampling of $\mathcal{X}\times\mathcal{Y}$ as in Definition \ref{defA3} in Appendix A, with a resolution $\rho(\mu,N_v)=\left(\frac{\mu(\mathcal{R})}{N_v}\right)^{\frac{1}{n+p}}$, where $\mu$ is the Lebesgue measure of the smallest Axis-Aligned Bounding Orthotope (AABO) $\mathcal{R}$ enclosing $\mathcal{X}\times\mathcal{Y}$, and $N_v$ is a chosen natural number large enough.\\
Let the learned correction term and contraction metric satisfy the following on $\mathcal{V}$.

\begin{subequations}\label{residual}
\renewcommand{\theequation}{\theparentequation.\alph{equation}}
\begin{empheq}[left=\empheqlbrace]{align}
&\mathrm{He}\left\{ \hat{P}\left[ \tfrac{\partial f}{\partial\hat{x}}(\hat{x}^{[j]}) +  \tfrac{\partial \hat{k}_{\theta}}{\partial\hat{x}}(\hat{x}^{[j]},y^{[j]})\right] \right\} + c\hat{P} \leq \delta I,
   && \forall (\hat{x}^{[j]},y^{[j]})\in \mathcal{V} \label{residual:a}\\
& \hat{k}_{\theta}(x^{[j]}, h(x^{[j]})) = \varepsilon(x^{[j]}),
   && \forall x^{[j]} \in \mathcal{V}_x  \label{residual:b}
\end{empheq}
\end{subequations}

\noindent
where $\delta$ is the MPDI residual, $\varepsilon$ is the boundary condition residual, and $\mathcal{V}_x=\{x^{[j]}\}_{j=1}^{j=N_v}\subset \mathcal{V}$.\\

\noindent
Consider the following set of assumptions:

\begin{ass}
\label{ass:learning}
    There exist finite positive constants $\bar{\varepsilon}$, $\kappa_{1x}$, $\kappa_{2}$, and $l_f$ such that 
    \begin{itemize}
        \item (A\ref{ass:learning}.1): $\|\varepsilon(x^{[j]})\|\leqslant \bar{\varepsilon} \quad \forall x^{[j]}\in \mathcal{V}_x$
        \item (A\ref{ass:learning}.2):$\|\hat{k}_\theta(x_1,h(x_1))-\hat{k}_\theta(x_2,h(x_2))\|\leqslant \kappa_{1x} \|x_1-x_2\| \quad \forall x_1,x_2\in \mathcal{X}^2$
    \item (A\ref{ass:learning}.3):$\| \frac{\partial \hat{k}_{\theta}}{\partial x}(x_1,y_1)- \frac{\partial \hat{k}_{\theta}}{\partial x}(x_2,y_2)\|\leqslant \kappa_2 \|(x_1,y_1)-(x_2,y_2)\| \quad \forall (x_1,y_1), (x_2,y_2)\in \left(\mathcal{X}\times\mathcal{Y}\right)^2$
        \item (A\ref{ass:learning}.4):$\| \frac{\partial f}{\partial x}(x_1)- \frac{\partial f}{\partial x}(x_2)\|\leqslant l_f \|x_1-x_2\| \quad \forall x_1,x_2 \in \mathcal{X}^2$
    \end{itemize}
\end{ass}
\noindent
In the preposition hereafter, we establish a computable learning generalization error on $\mathcal{X}\times\mathcal{Y}$.

\begin{proposition}
    \label{lem:learningbounds}
    Consider the nonlinear system in \eqref{model} satisfying Assumption \ref{asslearning}. Denote by $\hat{k}_{\theta}$ and $\hat{P}$ the correction term and contraction metric obtained by minimizing \eqref{eqoptimize} and satisfying \eqref{residual}. If Assumption \ref{ass:learning} holds, then there exists a $\mathcal{KL}$ function $\rho$, such that the learned correction term and contraction metric satisfy the following 

\begin{subequations}\label{residualonSet}
\renewcommand{\theequation}{\theparentequation.\alph{equation}}
\begin{empheq}[left=\empheqlbrace]{align}
&\mathrm{He}\left\{ \hat{P}\left[ \tfrac{\partial f}{\partial\hat{x}}(\hat{x}) +  \tfrac{\partial \hat{k}_{\theta}}{\partial\hat{x}}(\hat{x},y)\right] \right\} + c\hat{P} \leq \left(\delta+\lambda_{max}(\hat{P})\sqrt{n+p}\rho(\mu,N_v)\left( l_f+ \kappa_2 \right)\right) I,
   && \forall (\hat{x},y)\in \mathcal{X}\times\mathcal{Y} \label{res:a}\\
& \|\hat{k}_{\theta}(x, h(x))\| \leqslant \bar{\varepsilon}+\kappa_{1x}\sqrt{n}\rho(\mu,N_v), 
   && \forall x \in \mathcal{X}. \label{res:b}
\end{empheq}
\end{subequations}
  
\end{proposition}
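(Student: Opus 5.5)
The plan is a covering argument combined with the Lipschitz bounds of Assumption~\ref{ass:learning}. First, a fact about the validation grid $\mathcal{V}$ of Definition~\ref{defA3}. It is an equidistant sampling of the bounding orthotope $\mathcal{R}\supseteq\mathcal{X}\times\mathcal{Y}$ with cells of side $\rho(\mu,N_v)$. Hence every $(\hat{x},y)\in\mathcal{X}\times\mathcal{Y}$ has a grid point $(\hat{x}^{[j]},y^{[j]})\in\mathcal{V}$ with $\|(\hat{x},y)-(\hat{x}^{[j]},y^{[j]})\|_\infty\leq\rho(\mu,N_v)$. Passing to the Euclidean norm then gives
\[
\|(\hat{x},y)-(\hat{x}^{[j]},y^{[j]})\|\leq\sqrt{n+p}\,\rho(\mu,N_v).
\]
Projecting onto the state coordinates gives the analogous bound $\|x-x^{[j]}\|\leq\sqrt{n}\,\rho(\mu,N_v)$ with $x^{[j]}\in\mathcal{V}_x$. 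Here we take $\mathcal{V}_x$ to be the projection of the grid, and we must make sure the nearest grid point lies in the relevant set; this is the technical subtlety discussed at the end. Note that $\rho(\mu,N_v)=(\mu(\mathcal{R})/N_v)^{1/(n+p)}$ is increasing in $\mu$ and decreases to $0$ as $N_v\to\infty$, which supplies the claimed $\mathcal{KL}$ character.

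\textbf{Bound (\ref{res:a}).} Fix $(\hat{x},y)$ and its nearest grid point $(\hat{x}^{[j]},y^{[j]})$. Write
\[
D(\hat{x},y)=D(\hat{x}^{[j]},y^{[j]})+\operatorname{He}\{\hat{P}\Delta\},
\]
where
\[
\Delta=\Big[\tfrac{\partial f}{\partial \hat{x}}(\hat{x})-\tfrac{\partial f}{\partial \hat{x}}(\hat{x}^{[j]})\Big]+\Big[\tfrac{\partial \hat{k}_\theta}{\partial \hat{x}}(\hat{x},y)-\tfrac{\partial \hat{k}_\theta}{\partial \hat{x}}(\hat{x}^{[j]},y^{[j]})\Big].
\]
The $c\hat{P}$ terms cancel in this difference. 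By (\ref{residual:a}), $D(\hat{x}^{[j]},y^{[j]})\preceq\delta I$. For any symmetric $S$ we have $S\preceq\|S\|I$, and $\|\operatorname{He}\{\hat{P}\Delta\}\|\leq\|\hat{P}\|\|\Delta\|=\lambda_{max}(\hat{P})\|\Delta\|$. By (A\ref{ass:learning}.4) and (A\ref{ass:learning}.3),
\[
\|\Delta\|\leq l_f\|\hat{x}-\hat{x}^{[j]}\|+\kappa_2\|(\hat{x},y)-(\hat{x}^{[j]},y^{[j]})\|\leq (l_f+\kappa_2)\sqrt{n+p}\,\rho(\mu,N_v).
\]
Here we use $\|\hat{x}-\hat{x}^{[j]}\|\leq\|(\hat{x},y)-(\hat{x}^{[j]},y^{[j]})\|$. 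Adding the two pieces yields (\ref{res:a}).

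\textbf{Bound (\ref{res:b}).} For $x\in\mathcal{X}$, pick $x^{[j]}\in\mathcal{V}_x$ with $\|x-x^{[j]}\|\leq\sqrt{n}\,\rho(\mu,N_v)$. The triangle inequality, (A\ref{ass:learning}.2), (\ref{residual:b}) and (A\ref{ass:learning}.1) then give
\[
\|\hat{k}_\theta(x,h(x))\|\leq\|\varepsilon(x^{[j]})\|+\kappa_{1x}\|x-x^{[j]}\|\leq\bar{\varepsilon}+\kappa_{1x}\sqrt{n}\,\rho(\mu,N_v).
\]

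\textbf{Main obstacle.} The delicate step is the covering claim. The grid lives on $\mathcal{R}$, which may be strictly larger than $\mathcal{X}\times\mathcal{Y}$, while the Lipschitz constants of Assumption~\ref{ass:learning} and the residual inequalities are only asserted on $\mathcal{X}\times\mathcal{Y}$ and $\mathcal{V}$. I would handle this in one of two ways. The first is to use Definition~\ref{defA3} directly, if it takes $\mathcal{V}\subset\mathcal{X}\times\mathcal{Y}$ as the grid points falling in the set, so that the nearest admissible point is within the cell diameter. The second is to note that the Lipschitz bounds extend to $\mathcal{R}$ since $\hat{k}_\theta$ and $f$ are smooth on a neighborhood. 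I would also check that "resolution" means cell side length, so that the factors are $\sqrt{n+p}$ and $\sqrt{n}$ and not half of them; the stated bound is consistent with the full cell diagonal.
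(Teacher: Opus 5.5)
Your proposal follows the paper's proof essentially step by step: a Lipschitz bound on $D$ from (A\ref{ass:learning}.3)--(A\ref{ass:learning}.4) scaled by $\lambda_{max}(\hat{P})$, the nearest-sample bound $\|\cdot\|_\infty\leq\rho(\mu,N_v)$ converted to the Euclidean factor $\sqrt{n+p}$, then the residual \eqref{residual:a}, and the same argument with (A\ref{ass:learning}.1)--(A\ref{ass:learning}.2) for the boundary condition. The covering subtlety you flag is handled in the paper by your first option: Proposition~2 of Appendix~A takes $\mathcal{V}$ to be the grid centers lying in $\mathcal{X}\times\mathcal{Y}$ and asserts that the nearest one is within the full cell side $q$ in the $\infty$-norm, which explains the $\sqrt{n+p}$ rather than half of it (the $q/2$ bound is reserved for orthotopes).
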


\begin{proof}  
\underline{Learning error bound on the MPDI:}
For notational simplicity, we adopt the notation introduced in \eqref{eq:contractionInequalityV1}
\begin{equation}
 D(\hat{x},y)  \triangleq \mathrm{He}\left\{ \hat{P}\left[ \tfrac{\partial f}{\partial\hat{x}}(\hat{x}) +  \tfrac{\partial \hat{k}_{\theta}}{\partial\hat{x}}(\hat{x},y)\right] \right\} + c\hat{P}. \notag
\end{equation}
From assumption (A \ref{ass:learning}.3) and (A \ref{ass:learning}.4), one gets

\begin{align}
\|D(x_1,y_1)-D(x_2,y_2)\| &\leqslant  \left\|\hat{P} \left[ \tfrac{\partial f}{\partial x}(x_1) - \tfrac{\partial f}{\partial x}(x_2) +  \tfrac{\partial \hat{k}_{\theta}}{\partial x}(x_1,y_1)-\tfrac{\partial \hat{k}_{\theta}}{\partial x}(x_2,y_2)\right]\right\| \notag      \\
&\leqslant  \lambda_{max}(\hat{P})\left[ l_f\| x_1-x_2\|+ \kappa_2 \| (x_1,y_1)-(x_2,y_2)\| \right] \notag \\
&\leqslant  \lambda_{max}(\hat{P})\left( l_f+ \kappa_2 \right) \| (x_1,y_1)-(x_2,y_2)\|.
\label{eq:22}
\end{align}

\noindent
Relying on Proposition 2 in Appendix A, for any $(x_1,y_1)=(\hat{x},y)\in\mathcal{X}\times\mathcal{Y}$ and its nearest sample point 
$(x_2,y_2)=(\hat{x}^{[j]},y^{[j]})\in\mathcal{V}$  we have
$$\|(\hat{x},y)-(\hat{x}^{[j]},y^{[j]})\|_{\infty}\leq  \rho(\mu,N_v)\triangleq\left(\frac{\mu(\mathcal{R})}{N_v}\right)^{\frac{1}{n+p}},$$
Substituting in \eqref{eq:22}, leads to 
\begin{align}
D(\hat{x},y) &\leqslant 
D(\hat{x}^{[j]},y^{[j]}) +\lambda_{max}(\hat{P})\sqrt{n+p}\rho(\mu,N_v) \left( l_f+ \kappa_2 \right)I.
\label{eq:23}
\end{align} 

\noindent
Finally from \eqref{residual:a}, one gets
\begin{align}
D(\hat{x},y) &\leqslant \left(\delta+\lambda_{max}(\hat{P})\sqrt{n+p}\rho(\mu,N_v)\left( l_f+ \kappa_2 \right)\right) I.
\end{align} 

\noindent
\underline{Learning error bound on the BC:}
The proof follows directly from Assumption (\ref{ass:learning}.1) and (\ref{ass:learning}.2) by using the same steps as above. 

\end{proof}

\subsection{Robustness and disturbance rejection analysis}
\label{subsec:remark}
Consider now the following nonlinear system affected by a bounded process noise $w(t)$ and measurement noise $v(t)$
\begin{equation}
    \label{modelnoise}
    \left\{\begin{array}{l}\dot{x}(t)=f(x)+w(t)\\
    y_e(t)=h(x)+v(t),
    \end{array}\right.
\end{equation}
satisfying the following assumptions
\begin{ass} There exists positive constants $\bar{w}$ and  $\bar{v}$ such that
\begin{itemize}
    \item (A1):  $\norm{w(t)}\leqslant \Bar{w} \quad \forall t\in \mathbb{R}$, 
    \item (A2):    $\norm{v(t)}\leqslant \Bar{v} \quad \forall t\in \mathbb{R}$.
\end{itemize}
\label{assumvandw}
\end{ass}

\begin{ass}\label{assw}
For any bounded process noise $w\in\mathcal{W}$, the solutions of interest of system \eqref{modelnoise} initialized in $\mathcal{X}_0$, remain in the compact convex set $\mathcal{X}$, with $\mathcal{X}_0\subseteq\mathcal{X}$, and $\mathcal{X}$ as in Assumption \ref{asslearning}.
\end{ass} 

\noindent
Note that Assumption \ref{assw} is required to ensure that the process noise $w(t)$ does not drive the solutions of \eqref{modelnoise} out of the set of interest $\mathcal{X}$. As this might result in a degradation of the learning performance, and most importantly, Assumption \ref{assw} makes sure that the contraction condition still holds for the perturbed system in \eqref{modelnoise}.   
\\

\noindent
Consider now the associated observer
\begin{equation}
\label{observernoise}
\left\{\begin{array}{l}
\dot{\hat{x}}=f(\hat{x})+\hat{k}_{\theta}(\hat{x}, y_e) \\
\hat{y}=h(\hat{x}),
\end{array}\right.
\end{equation}
 where $\hat{k}_{\theta}$ is the learned correction term satisfying the following assumption:  
\begin{ass} The learned correction term $\hat{k}_{\theta}(\hat{x},y)$ is Lipschitz on its second argument, uniformly in $\hat{x}$ 
\begin{equation}
\label{lipschitz}
 \norm{ \hat{k}_\theta(., y_1)-\hat{k}_\theta(., y_2)}\leq  \kappa_y \norm{y_1-y_2},  \forall \hat{x}\in \mathcal{X}  
\end{equation}
with $\kappa>0$
\label{assumlipschitz}
\end{ass}

\begin{theorem}\label{thISS}
Consider system \eqref{modelnoise} affected with process and measurement noise satisfying Assumption \ref{assumvandw}. Furthermore, consider the observer in \eqref{observernoise} with the learned correction term $\hat{k}_{\theta}$ and contraction metric $\hat{P}$ satisfying \eqref{residualonSet}, with the correction term additionally satisfying  Assumption \ref{assumlipschitz}. There exists a computable bound $\bar{\delta}$ such that if $\delta<\bar{\delta}$, then there exist positive constants $\eta_0$, $\eta_1$, $\eta_2$, and $\eta_3$ such that the estimation error is exponentially input-to-state stable and satisfy 
\begin{equation}
\|x(t)-\hat{x}(t)\| \leqslant  \eta _0\|x(0)-\hat{x}(0)\| \exp{(-\eta_1 t)}+ \eta_2 \left(\bar{\varepsilon}+\frac{\kappa_{1x}}{N}\right)+\eta_3 \bar{w}+\eta_4 \bar{v},\\
\label{estimationerrorvandw}
\end{equation}
\noindent

\end{theorem}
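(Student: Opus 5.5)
We take the quadratic Lyapunov function $V=\frac12 e^T\hat P e$ with $e=x-\hat x$, as in the proof of Theorem \ref{thu}, and derive a differential inequality $\dot V\le -2\gamma V + (\text{bounded terms})\sqrt{V}$. A comparison lemma then gives the E-ISS estimate \eqref{estimationerrorvandw}. Along \eqref{modelnoise}--\eqref{observernoise} we have
\begin{equation*}
\dot e = f(x)-f(\hat x)-\hat k_\theta(\hat x,y_e)+w .
\end{equation*}
We split the correction term as
\begin{equation*}
\hat k_\theta(\hat x,y_e)=\big[\hat k_\theta(\hat x,y)-\hat k_\theta(x,y)\big]+\hat k_\theta(x,h(x))+\big[\hat k_\theta(\hat x,y_e)-\hat k_\theta(\hat x,y)\big],
\end{equation*}
where $y=h(x)$. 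The first bracket pairs with $f(x)-f(\hat x)$. Writing both differences through the fundamental theorem of calculus along the segment $\hat x+\alpha e$, which stays in $\mathcal{X}$ by convexity and Assumption \ref{assw}, gives
\begin{equation*}
e^T\hat P\int_0^1\Big[\tfrac{\partial f}{\partial x}+\tfrac{\partial \hat k_\theta}{\partial x}\Big](\hat x+\alpha e,y)\,d\alpha\; e .
\end{equation*}
The sign convention of the correction term must be kept consistent with Theorem \ref{th1}. By \eqref{res:a} of Proposition \ref{lem:learningbounds}, this quadratic form is at most $-c\,e^T\hat P e+\bar\delta_{tot}\|e\|^2$, with $\bar\delta_{tot}=\delta+\lambda_{max}(\hat P)\sqrt{n+p}\,\rho(\mu,N_v)(l_f+\kappa_2)$.

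The remaining terms are handled by Cauchy--Schwarz. Using \eqref{res:b}, the boundary term gives $|e^T\hat P\hat k_\theta(x,h(x))|\le \lambda_{max}(\hat P)\|e\|(\bar\varepsilon+\kappa_{1x}\sqrt n\rho)$. Assumption \ref{assumlipschitz} bounds the measurement term by $\lambda_{max}(\hat P)\kappa_y\bar v\|e\|$, and the process noise term is at most $\lambda_{max}(\hat P)\bar w\|e\|$. Using $\|e\|^2\le 2V/\lambda_{min}(\hat P)$, we obtain
\begin{equation*}
\dot V\le -2\Big(c-\tfrac{\bar\delta_{tot}}{\lambda_{min}(\hat P)}\Big)V+\lambda_{max}(\hat P)\sqrt{\tfrac{2V}{\lambda_{min}(\hat P)}}\,\big(\bar\varepsilon+\kappa_{1x}\sqrt n\rho+\bar w+\kappa_y\bar v\big).
\end{equation*}
This leads to the computable threshold $\bar\delta=c\lambda_{min}(\hat P)-\lambda_{max}(\hat P)\sqrt{n+p}\,\rho(\mu,N_v)(l_f+\kappa_2)$. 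The threshold must be positive, which requires $N_v$ large enough. Then $\delta<\bar\delta$ yields $\eta_1=c-\bar\delta_{tot}/\lambda_{min}(\hat P)>0$.

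Next we set $W=\sqrt V$, so that $\dot W\le -\eta_1 W+\tfrac{\lambda_{max}}{\sqrt{2\lambda_{min}}}(\cdots)$ wherever $V\neq0$. Zero crossings are handled with an upper Dini derivative or by using $\sqrt{V+\epsilon}$ and letting $\epsilon\to0$. Grönwall's lemma integrates this, and converting back with $\|e\|\le\sqrt{2V/\lambda_{min}}$ gives
\begin{equation*}
\eta_0=\sqrt{\lambda_{max}(\hat P)/\lambda_{min}(\hat P)},\qquad \eta_2=\eta_3=\frac{\lambda_{max}(\hat P)}{\lambda_{min}(\hat P)\eta_1},\qquad \eta_4=\kappa_y\eta_2 .
\end{equation*}
The resolution term $\kappa_{1x}\sqrt n\rho(\mu,N_v)$ plays the role of the $\kappa_{1x}/N$ term in the statement, since it decreases with the number of samples. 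The constant $\eta_2$ absorbs the remaining factors.

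The main obstacle is keeping both $x$ and $\hat x$, and the whole segment between them, inside $\mathcal{X}\times\mathcal{Y}$, which is where \eqref{residualonSet} holds. For $x$ this follows from Assumption \ref{assw}. For $\hat x$ we expect to need either an invariance argument or an explicit standing hypothesis. The same issue arises for $y_e$ leaving $\mathcal{Y}$, which is why the measurement noise is treated only through the Lipschitz bound in the second argument, evaluated at the nominal $y$ inside $\mathcal{Y}$.
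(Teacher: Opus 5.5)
Your proof is correct. It shares the paper's skeleton: the same $V=\frac12 e^T\hat P e$, and the same add-and-subtract of $\hat k_\theta(x,y)$, so that $f(x)-f(\hat x)+\hat k_\theta(x,y)-\hat k_\theta(\hat x,y)$ is handled by the fundamental theorem of calculus and \eqref{res:a}. Where you differ is in how the three perturbation terms are treated, and this is a genuinely different route. The paper applies Young's inequality, $e^T\hat P a\le\frac12 e^T\hat P^2e+\frac12\|a\|^2$, and then uses $e^T\hat P^2e\le\lambda_{\max}(\hat P)\,e^T\hat P e$. This gives $\dot V\le-2\eta_1V+\frac12(\cdot)^2$ with additive squared inputs, so the comparison lemma applies directly to $V$ with no nonsmoothness at $V=0$, and the gains are of order $1/\sqrt{\eta_1\lambda_{\min}(\hat P)}$. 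The price is that the decay rate loses $\lambda_{\max}(\hat P)$: the paper's $\bar\delta=\lambda_{\min}(\hat P)\big(c-\lambda_{\max}(\hat P)-\operatorname{Cond}(\hat P)\sqrt{n+p}\,\rho(\mu,N_v)(l_f+\kappa_2)\big)$ carries the extra requirement $\lambda_{\max}(\hat P)<c$. That requirement is not invariant under $\hat P\mapsto s\hat P$, even though the MPDI and the residual $\delta$ are. Your Cauchy--Schwarz bound combined with the $\sqrt V$ comparison removes this. Your $\bar\delta$ exceeds the paper's by $\lambda_{\min}(\hat P)\lambda_{\max}(\hat P)$, so your hypothesis is weaker. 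Your condition, $\delta/\lambda_{\min}(\hat P)<c-\operatorname{Cond}(\hat P)\sqrt{n+p}\,\rho(\mu,N_v)(l_f+\kappa_2)$, and your gains $\operatorname{Cond}(\hat P)/\eta_1$ are both scale-invariant. This shows that $\lambda_{\max}(\hat P)<c$, and the loss $\mathcal{L}_{ISS}$ built on it in Section~\ref{sec:robust}, is an artifact of Young's inequality, while the conditioning loss $\mathcal{L}_R$ remains well motivated. The only cost is the Dini-derivative technicality at $V=0$, which you handle correctly. You could also tighten your gain to $\sqrt{\operatorname{Cond}(\hat P)}/\eta_1$ by using $|e^T\hat P a|\le\|\hat P^{1/2}e\|\,\|\hat P^{1/2}a\|$. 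Your reading of the $\kappa_{1x}/N$ term as $\kappa_{1x}\sqrt n\,\rho(\mu,N_v)$ matches what the paper's own proof produces. Finally, the requirement $\hat x(t)\in\mathcal{X}$ that you flag is also used, implicitly and without justification, in the paper's proof.
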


\begin{proof}
Consider the estimation error $e=x-\hat{x}$ and the following Lyapunov function
$ V=\frac{1}{2}e^T\hat{P}e$. Taking the time derivative leads to
\begin{align}
\dot{V}& =e^T \hat{P} \left[f(x)+w(t)-f(\hat{x})-\hat{k}_{\theta}\left(\hat{x}, y_e\right) \right]  \notag \\ 
& =e^T \hat{P} \left[f(x)+w(t)-f(\hat{x})-\hat{k}_{\theta}\left(\hat{x}, y_e\right)+\hat{k}_{\theta} (\hat{x}, y)-\hat{k}_{\theta} (\hat{x}, y)\right] \notag \\
& =e^T \hat{P} [f(x)-f(\hat{x})-\hat{k}_{\theta}(\hat{x}, y)]  +e^T \hat{P} \left[\hat{k}_{\theta}\left(\hat{x}, y\right)-\hat{k}_{\theta}\left(\hat{x}, y_e\right)\right]+e^T \hat{P} w(t). \notag \\
&=e^T \hat{P} [f(x)-f(\hat{x})-\hat{k}_{\theta}(\hat{x}, y) +  \hat{k}_{\theta}(x, y) ]  +   e^T \hat{P} \left[\hat{k}_{\theta}\left(\hat{x}, y\right)-\hat{k}_{\theta}\left(\hat{x}, y_e\right)-\hat{k}_{\theta}(x, y) \right]+e^T \hat{P} w(t).
\label{eqw1}
\end{align}
\noindent
Using Young's Inequality, \eqref{eqw1} becomes 
\begin{align} 
\dot{V} \leqslant & e^{T} \hat{P} [f(x)-f(\hat{x})-\hat{k}_{\theta}(\hat{x}, y) +  \hat{k}_{\theta}(x, y) ]  +  e^T \hat{P}^2 e + \frac{1}{2} \left\|\hat{k}_{\theta}(x, y)\right\|^2+\frac{1}{2}\left\|\hat{k}_{\theta}(\hat{x}, y)-\hat{k}_{\theta}\left(\hat{x}, y_e\right)\right\|^2+\frac{1}{2}\|w(t)\|^2 .
\label{eqw2}
\end{align}
\noindent
Substituting \eqref{lipschitz} in \eqref{eqw2} and considering that $\hat{P}\leqslant \lambda_{max}(\hat{P}) I, $ one obtains 
\begin{align}
\dot{V}  & \leqslant e^{T} \hat{P} [f(x)-f(\hat{x})-\hat{k}_{\theta}(\hat{x}, y) +  \hat{k}_{\theta}(x, y) ] +   \lambda_{\max}(\hat{P}) e^T \hat{P} e + \frac{1}{2} \left\|\hat{k}_{\theta}(x, y)\right\|^2 + \frac{1}{2}\|w(t)\|^2  +\frac{\kappa_y^2}{2}\left\|v(t)\right\|^2.
\label{eqV3}
\end{align}
\noindent
 Considering Assumption \ref{assw} and using the fundamental theorem of calculus as in  \cite[see Theorem 4.3 p.~231] {BERNARD2022}
 
 \begin{align}
 \dot{V} \leqslant &  e^T\hat{P} \left( \int_0^1 \frac{\partial f}{\partial x}(\hat{x} + \alpha(x - \hat{x})) + \frac{\partial \hat{k}_{\theta}}{\partial x}(x + \alpha(\hat{x} - x), y) \, d\alpha \right) \times e+ \lambda_{\max}(\hat{P}) e^T \hat{P} e + \frac{1}{2} \left\|\hat{k}_{\theta}(x, y)\right\|^2 + \frac{1}{2}\|w(t)\|^2  +\frac{\kappa_y^2}{2}\left\|v(t)\right\|^2.
 \label{eq33}
\end{align}

\noindent
Using the results of Proposition 1 for $(\hat{x},y)\in \mathcal{X}\times \mathcal{Y}$, with $\mathcal{X}$ being convex, one obtains 
 
\begin{align}
 \dot{V} \leqslant &  - c e^T\hat{P}e+ \left(\delta+\lambda_{max}(\hat{P})\sqrt{n+p}\rho(\mu,N_v)\left( l_f+ \kappa_2 \right)\right) e^Te+  \lambda_{\max}(\hat{P}) e^T \hat{P} e + \frac{1}{2} \left(\bar{\varepsilon}+\sqrt{n}\rho(\mu,N_v) \kappa_{1x}\right)^2 + \frac{1}{2}\|w(t)\|^2  +\frac{\kappa_y^2}{2}\left\|v(t)\right\|^2, \notag \\
 \leqslant & - \left(c- \frac{\delta}{\lambda_{min}(\hat{P})}- \operatorname{Cond}(\hat{P})\sqrt{n+p}\rho(\mu,N_v)(l_f+\kappa_2)- \lambda_{\max}(\hat{P})\right) e^T \hat{P} e  +\frac{1}{2} \left(\bar{\varepsilon}+\sqrt{n}\rho(\mu,N_v)\kappa_{1x}\right)^2+\frac{1}{2}\|w\|^2+\frac{\kappa_y^2}{2} \|v\|^2 ,
\end{align}
\begin{align}
 \dot{V} \leqslant &- 2 \eta_1 V +\frac{1}{2} \left(\bar{\varepsilon}+\sqrt{n}\rho(\mu,N_v)\kappa_{1x}\right)^2  +\frac{1}{2}\|w\|^2+\frac{\kappa_y^2}{2} \|v\|^2, 
\end{align}
\noindent
with $\eta_1=c- \frac{\delta}{\lambda_{min}(\hat{P})}-\operatorname{Cond}(\hat{P})\sqrt{n+p}\rho(\mu,N_v)(l_f+\kappa_2)- \lambda_{\max}(\hat{P})$.\\
\noindent
Using the bounds on $w(t)$ and $v(t)$ in Assumption \ref{assumlipschitz}, and the comparison lemma 
\cite[lemma 3.4, p. 102-103 ]{khalil2002nonlinear}, one obtains the error bound in \eqref{estimationerrorvandw} with the following positive constants
\begin{equation}
  \eta_0=\sqrt{\frac{\lambda_{\max}(\hat{P})}{\lambda_{\min}(\hat{P})}}
, \eta_2=\eta_3= \frac{1}{2\sqrt{\eta_1\lambda_{\min}(\hat{P})}}, \text{ and } 
\eta_4= \frac{\kappa_y}{2\sqrt{\eta_1\lambda_{\min}(\hat{P})}}.  
\end{equation}
\noindent
Finally, the estimation error is exponentially input-to-state stable if $\delta \;<\; \underbrace{\lambda_{\min}(\hat{P})\left(
    c - \lambda_{\max}(\hat{P}) - \operatorname{Cond}(\hat{P}) \sqrt{n+p}\rho(\mu,N_v)\,(l_f + \kappa_2)
\right)}_{\,\triangleq\, \bar{\delta}} $.
\end{proof}
\begin{remark}
    Note that similar a E-ISS bounds can be derived for the non-autonomous case except $\eta_1=c-\frac{\delta}{\lambda_{min}(\hat{P})}-\lambda_{\max}(\hat{P})- \operatorname{Cond}(\hat{P})\,(l_f + \kappa_2) \sqrt{n+p}\rho(\mu,N_v)-  l_{\Phi} \operatorname{Cond}(\hat{P})$.
\end{remark}
\begin{remark}
Since $\bar{\delta}$ is a computable quantity, a learning certificate can be added at the end of training to verify whether the residual training error for MPDI in \eqref{residual} satisfies $\delta < \bar{\delta}$, or whether further iterations are required in the training to achieve this bound.
\end{remark}

\section{Robust learning-based contraction observer}
\label{sec:robust}
A closer look at the results of Theorem \ref{thu} and Theorem \ref{thISS} shows how the spectrum of the contraction metric impacts the robustness and convergence results of the proposed observer. While the contraction metric is not explicitly part of the observer structures in \eqref{observer}, \eqref{observeru}, or \eqref{observernoise}, it influences the learning of the correction term. Therefore, we aim in this section to take the contribution of the present paper one step further by proposing a robust learning-based contraction observer.
\subsection{Problem Formulation}
The results of Theorem \ref{thu} show that an ill conditioning of the contraction metric amplifies the effect of the input and might compromise the exponential convergence of the observer. Moreover, from Theorem \ref{thISS}, one can see that an ill conditioning of the contraction metric amplifies the effect of the learning generalization error on $\mathcal{X}\times \mathcal{Y}$, potentially leading to a loss of E-ISS.
To avoid this, one might consider $P=I$. However, in addition to the restrictiveness of solving \eqref{eq:ideal1} for a Euclidean metric, this choice does not provide an optimal noise and disturbance rejection, as it will be detailed below.\\

\noindent
Since the learning generalization error is an a-posteriory consequence of the training, to properly analyze the effect of the spectrum of the contraction metric on the noise and disturbance rejection, let's first consider a negligible MPDI learning validation error $\delta\rightarrow0$, $N_v\rightarrow +\infty$, and $\hat{P}\rightarrow P$. This leads to the following coefficients from Theorem \ref{thISS}
\begin{equation}
    \eta_1=c-\lambda_{\max}(P) \quad
  , \eta_2=\eta_3= \frac{1}{2\sqrt{\left(c-\lambda_{\max}(P)\right)\lambda_{\min}(P)}}, \text{ and } 
\eta_4= \frac{\kappa}{2\sqrt{\left(c-\lambda_{\max}(P)\right)\lambda_{\min}(P)}}
\label{paramrobust}
\end{equation}

\noindent
Two main observations can be made regarding the above E-ISS result:
\begin{enumerate}[(i).]
    \item The E-ISS is maintained if  $\lambda_{\max}(P)< c$.
    \item  The attenuation factors for both the noise and disturbance rely heavily on the spectrum of the contraction metric. 
\end{enumerate}

It is clear that an optimal noise and disturbance rejection is obtained by minimizing $\eta_2$, $\eta_3$, and $\eta_4$. From the parameters in \eqref{paramrobust}, minimizing  $\eta_2$, $\eta_3$ and $\eta_4$ is equivalent to maximizing $\lambda_{\min}(P)$ and minimizing $\lambda_{\max}(P)$; therefore, minimizing $\lambda_{\max}(P)-\lambda_{\min}(P)$. \\

\noindent
One might be tempted to simply consider $\lambda_{\max}(P)=\lambda_{\min}(P)=\lambda$, which, by the spectral decomposition theorem \cite[Theorem 4.1.5]{horn2012matrix}, is equivalent to $P=\lambda I$. Therefore, the optimal noise rejection is obtained by maximizing  $\left(c-\lambda\right)\lambda$ leading to  $\lambda^*=\frac{c}{2}$, i.e., $P=\frac{c}{2}I$.

\noindent
Now notice that the contraction inequality in \eqref{eq:ideal1} is scalar invariant in $P$, i.e., the solution for $P=\lambda I$ for any $\lambda>0$ is the same as for $P=I$.  Therefore, the above optimal noise rejection result is only valid for $c=2$, which is equivalent to the Euclidean case in \cite{MaraniIsrael2025}. This result entails that using a Euclidean metric (and therefore, scalar multiples of it) does not provide optimal noise and disturbance rejection for any contraction rate $c$. This further highlights the advantage of using a non-Euclidean contraction metric.\\

\noindent
Driven by the above observations, we propose a robust learning-based contraction observer by promoting the contraction conditions, the E-ISS requirements, and the optimal noise and disturbance rejection in the learning process. This is formulated into the following optimization problem: \begin{equation}\label{eq:ideal}
\left\{ \begin{aligned}
  \min_{k,\,P} \quad & \operatorname{cond}(P)-1 \\[4pt]
  \text{s.t} \quad
  & \operatorname{He}\!\left\{P\!\left[
      \tfrac{\partial f}{\partial\hat{x}}(\hat{x})
      + \tfrac{\partial k}{\partial\hat{x}}(\hat{x},y)
    \right]\right\} + cP \leq 0,
    \quad \forall(\hat{x}, y) \in \mathcal{X} \times \mathcal{Y},\\
  & k\!\left(x,\,h(x)\right) = 0,
    \quad \forall\,x \in \mathcal{X},\\
  & \lambda_{\max}(P) < c.
\end{aligned}\right.
\end{equation} 
Where the first constraint is the MPDI  enforcing contraction at rate $c$; the second is the boundary condition ensuring exact reconstruction on the output manifold; and the third
is the E-ISS condition of Theorem \ref{thISS}. Minimizing $\operatorname{cond}(P)$
aims at minimizing the noise  and disturbance attenuation factors $\eta_2$,
$\eta_3$ and $\eta_4$ in \eqref{estimationerrorvandw}.

\subsection{Promoting robustness in the learning-based contraction observer}
Similarly to section \ref{sec:pinnDes}, we propose to use Physics Informed Neural Networks to solve the problem in \eqref{eq:ideal} for the correction term and the contraction metric. Consider $\hat{k}_{\theta}$ and $\hat{P}$ as defined in \eqref{eq:mlpCompMath} and \eqref{eq:p}, respectively, be the solution to the following optimization problem
\begin{equation}
    \min_{\theta,L} \mathcal{L}_R(\theta, L)= \mu_1 \mathcal{L}_{\text{MPDI}}(\theta,L)
+ \mu_2 \mathcal{L}_{\text{BC}}(\theta)
+ \mu_3 \mathcal{L}_{P}(L)
+ \mu_4 \mathcal{L}_{iss}(L)+\mu_5 \mathcal{L}_{R}(L),
\label{lossRobust}
\end{equation}

where $\mathcal{L}_{\text{MPDI}},\,\mathcal{L}_{\text{BC}}$ and
$\mathcal{L}_{P}$ are as in~\eqref{eqoptimize}, while $\mathcal{L}_{iss}$ and $\mathcal{L}_{R}$  are related to the E-ISS result in Theorem \eqref{thISS}. The set of coefficients $\{\mu_1, \mu_2,$ ...., $\mu_5\}$ are tunable coefficients balancing the influence of each component. The three original terms preserve the contraction
property and the admissibility of $\hat{P}$, while the new terms act
as soft constraints that bias the optimization towards metrics for
which the E-ISS result  of
Theorem~\ref{thISS} is sharpest.

\subsubsection{E-ISS loss function}
From the findings of Theorem \ref{thISS}, the E-ISS is maintained if the spectral norm of $\hat{P}$ is less than the contraction rate. Therefore, we construct the following loss to ensure the E-ISS of the proposed learning-based contraction observer
\begin{equation}
\label{eq:EissLoss}
\mathcal{L}_{ISS}= \frac{1}{\beta} \ln\left(1 + \exp{\left(\beta (\lambda_{max}(\hat{P})-c)\right)}\right)
\end{equation}
 with $\beta$ being a user-set large positive constant. 

\subsubsection{Noise attenuation loss function}
An optimal noise and disturbance rejection is obtained by minimizing $\eta_2$, $\eta_3$ and $\eta_4$ in \eqref{estimationerrorvandw}, which is achieved when $\lambda_{\max}(P)$ and
$\lambda_{\min}(P)$ are close, i.e., when $\hat{P}$ is
well-conditioned. Since the MPDI does not fix
$P$ uniquely, this residual freedom can be exploited during
training to steer the learning towards metrics that are intrinsically
robust.
Recalling that the metric is parameterized through its Cholesky-like
factor as $\hat{P}=LL^\top$, the condition numbers of $L$ and
$\hat{P}$ are related by
\begin{equation}
    \operatorname{cond}(\hat{P}) \;=\; \operatorname{cond}(L)^{2},
    \label{eq:condLP}
\end{equation}
so that acting on $L$ directly controls the conditioning of $\hat{P}$.
This motivates the noise attenuation promoting loss
\begin{equation}
\mathcal{L}_{R}(L)\;=\;\operatorname{cond}(L)^{2}-1,
\label{eq:losscond}
\end{equation}
which is non-negative and scale-invariant in $L$.\\

\noindent
The training of the proposed robust learning-based contraction observer is detailed in Algorithm 2
\begin{algorithm}
\label{AlgoR}
\caption{Training Algorithm of the Robust learning-based Contraction Nonlinear Observer with Learnable Metric $P$}
\begin{algorithmic}[1]
\State \textbf{Inputs:} dataset $\mathcal{D}=\{\hat{x}^{[j]},y^{[j]}\}_{j=1}^{N}$; parameters $\Theta=(\theta,L)$; dynamics $f$, measurement $h$; Adam learning rate $\alpha$; softplus sharpness $\beta>0$; epochs $N_{\text{Adam}},N_{\text{LBFGS}}$; loss weights $(\mu_1,\dots,\mu_5)$ (set $\mu_5=0$ for the non-optimized metric, $\mu_5>0$ for the robust metric); contraction rate $c>0$; regularization constant $\epsilon$; batch size $|\mathcal{B}|$.
\For{$q=1,\dots,N_{\text{Adam}}$} \Comment{Adam phase}
  \For{mini-batch $\mathcal{B}\subset\mathcal{D}$}
    \State $P \gets L L^\top$
    \State Compute $\hat{k}_\theta(\hat{x},y)$ on $\mathcal{B}$
    \State Compute the jacobians $\big\{\frac{\partial f}{\partial \hat{x}}(\hat{x}),\,\frac{\partial \hat{k}_\theta}{\partial \hat{x}}(\hat{x},y)\big\}$ on $\mathcal{B}$
    \State Build $D(\hat{x},y) = \operatorname{He}\!\Big\{ P\big[\tfrac{\partial f}{\partial \hat{x}}+\tfrac{\partial \hat{k}_\theta}{\partial \hat{x}}\big]\Big\} + cP$
    \State $\mathcal{L}_{\text{MPDI}} \gets \frac{1}{|\mathcal{B}|}\sum_{j}\sum_{i=1}^{n}\frac{1}{\beta}\ln\!\big(1+\exp(\beta\,\lambda_i(D(\hat{x}^{[j]},y^{[j]})))\big)$ \Comment{Eq.~\eqref{eq:minorsV1}}
    \State $\mathcal{L}_{\text{BC}} \gets \frac{1}{|\mathcal{B}|}\sum_{j}\|\hat{k}_\theta(\hat{x}^{[j]},h(\hat{x}^{[j]}))\|^2$ \Comment{Eq.~\eqref{eq:eqBCV1}}
    \State $\mathcal{L}_{P} \gets \sum_{k=1}^{n}\frac{1}{l_{ii}^2+\epsilon}$ \Comment{Eq.~\eqref{eq:LdiagLossV1}}
    \State $\mathcal{L}_{\text{ISS}} \gets \frac{1}{\beta}\ln\!\big(1+\exp(\beta(\lambda_{\max}(P)-c))\big)$ \Comment{Eq.~\eqref{eq:EissLoss}}
    \State $\mathcal{L}_{R} \gets \operatorname{cond}(L)^2-1$ \Comment{Eq.~\eqref{eq:losscond}}
    \State $\mathcal{L}(\theta,L) \gets \mu_1\mathcal{L}_{\text{MPDI}}+\mu_2\mathcal{L}_{\text{BC}}+\mu_3\mathcal{L}_{P}+\mu_4\mathcal{L}_{\text{ISS}}+\mu_5\mathcal{L}_{R}$ \Comment{Eq.~\eqref{lossRobust}}
    \State $\Theta \gets \Theta - \alpha\,\nabla_{\Theta}\mathcal{L}(\theta,L)$ \Comment{jointly update $\theta$ and $L$}
  \EndFor
\EndFor
\If{$N_{\text{LBFGS}}>0$} \Comment{optional refinement}
  \For{$q=1,\dots,N_{\text{LBFGS}}$}
    \State $P \gets L L^\top$
    \State Recompute $\mathcal{L}(\theta,L)$ via Eq.~\eqref{lossRobust} (full data or large batch)
    \State $\Theta \gets \text{LBFGS\_step}\big(\Theta,\nabla_{\Theta}\mathcal{L}\big)$
  \EndFor
\EndIf
\State \textbf{Output:} optimized parameters $\Theta^\star=(\theta^\star,L^\star)$.
\end{algorithmic}
\end{algorithm}

\section{Numerical Simulations}
\label{sec:sim}

\noindent
In this section, we evaluate the performance of the proposed learning-based nonlinear contraction observer by conducting numerical simulations on four distinct nonlinear systems: the Van der Pol oscillator \eqref{VdP}, the reverse Duffing oscillator \eqref{duffing}, the Lotka-Volterra system \eqref{LV}, and the Rossler attractor \eqref{Rossler}.

\begin{center}
\begin{tabular}{p{4.7cm} p{2.5cm} p{4.0cm} p{4.5cm}}
\normalsize
 \begin{equation}\label{VdP}
\!\!\!\!\left\{\begin{array}{l}
\dot{x}_{1}\!=  x_{2} \\
\dot{x}_{2}=-x_{1}+x_2(1-x_{1}^2)\!+u \\
y=x_{1},
\end{array}\right.
\end{equation}  &  \normalsize \begin{equation}\label{duffing}
    \!\!\!\!\left\{\begin{array}{l}
    \dot{x}_{1}\!=  x_{2}^3 +u\\
    \dot{x}_{2}= -x_{1}\! \\
    y = x_{1},    
\end{array}\right.
\end{equation}
& \normalsize
\begin{equation}\label{LV}
\left\{\begin{array}{l}
\dot{x}_{1} = \frac{2}{3} x_{1} - \frac{4}{3} x_{1} x_{2}+u_1 \\
\dot{x}_{2} =  x_{1} x_{2} -  x_{2} \!+u_2 \\
y = x_{1}.
\end{array}\right.
\end{equation}
&  \normalsize \begin{equation} \label{Rossler}
\left\{\begin{array}{l}
\dot{x}_1 = -x_1 - x_2 \\
\dot{x}_2 = x_1 + 0.2 x_2 \\
\dot{x}_3 = 0.2 + x_3 (x_1 - 5.7)+u \!\\
y=x_{2}
\end{array}\right.
\end{equation}
\end{tabular}
\end{center}

\noindent
The training dataset was constructed by generating collocation points sampled uniformly from the region of interest. A total of $4,000$ samples were generated for Van der Pol oscillator from $(\mathcal{X}, \mathcal{Y})=([-2,2] \times[-3,3], [-2, 2])$ and for Reverse Duffing oscillator from $(\mathcal{X}, \mathcal{Y})=([-1,1]^2, [-1, 1])$, while $6,000$ samples were generated for the Lotka-Volterra system from $(\mathcal{X}, \mathcal{Y})=([0,3.5] \times[0,2.5], [0, 3.5])$, and for the Rossler Attractor from $(\mathcal{X}, \mathcal{Y})=([-11,11] \times[-11,11] \times [-1, 30], [-11, 11])$. Note that the Van der Pol and Reverse Duffing Oscillators can take initial conditions in $\mathcal{X}_0=\mathcal{X}$, while the Lotka-Volterra and the Rossler attractor are initialized within $\mathcal{X}_0=[1, 2]^2$  and $\mathcal{X}_0=[-1, 1]^3$, respectively. 

\noindent
For all systems, the observer correction term was parameterized by a 5-hidden-layer perceptron with 30 neurons per layer and a tanh activation function. The training employed a two-phase optimization strategy, known to be effective for physics-informed neural networks \cite{raissi2019, lou2021physics}. We first used Adam optimizer \cite{kingma2014adam} with a learning rate of $10^{-3}$ for $300$ iterations, followed by the L-BFGS algorithm \cite{byrd1995limited} for an additional $300$ iterations. The weights for the composite loss function used for each system are detailed in Table \ref{tab:weights}. All training was performed on a single NVIDIA A40 GPU.

\begin{table}[h]
    \centering
    \caption{Loss function weights used for each system.}
    \begin{tabular}{|l|c|c|c|c|c|}
    \hline
     System & $\mu_1$ & $\mu_2$ & $\mu_3$ & $\mu_4$ & $\mu_5$ \\
            & ($\mathcal{L}_{\text{MPDI}}$) & ($\mathcal{L}_{\text{BC}}$) & ($\mathcal{L}_{P}$) & ($\mathcal{L}_{\text{ISS}}$) & ($\mathcal{L}_{R}$) \\
    \hline
     Van der Pol        & $10^{-2}$ & $1$       & $10^{-4}$ & $10^{-3}$ & $10^{-2}$ \\
     Reverse Duffing    & $10^{-1}$ & $1$       & $10^{-4}$ & $10^{-3}$ & $10^{-2}$ \\
     Lotka--Volterra    & $10^{-1}$ & $10^{-1}$ & $10^{-4}$ & $10^{-3}$ & $10^{-2}$ \\
     Rossler Attractor  & $1$       & $10^{-1}$ & $10^{-4}$ & $10^{-3}$ & $10^{-1}$ \\
    \hline
    \end{tabular}
    \label{tab:weights}
\end{table}

\begin{figure}[t]
\centering
    \includegraphics[width=0.4\linewidth]{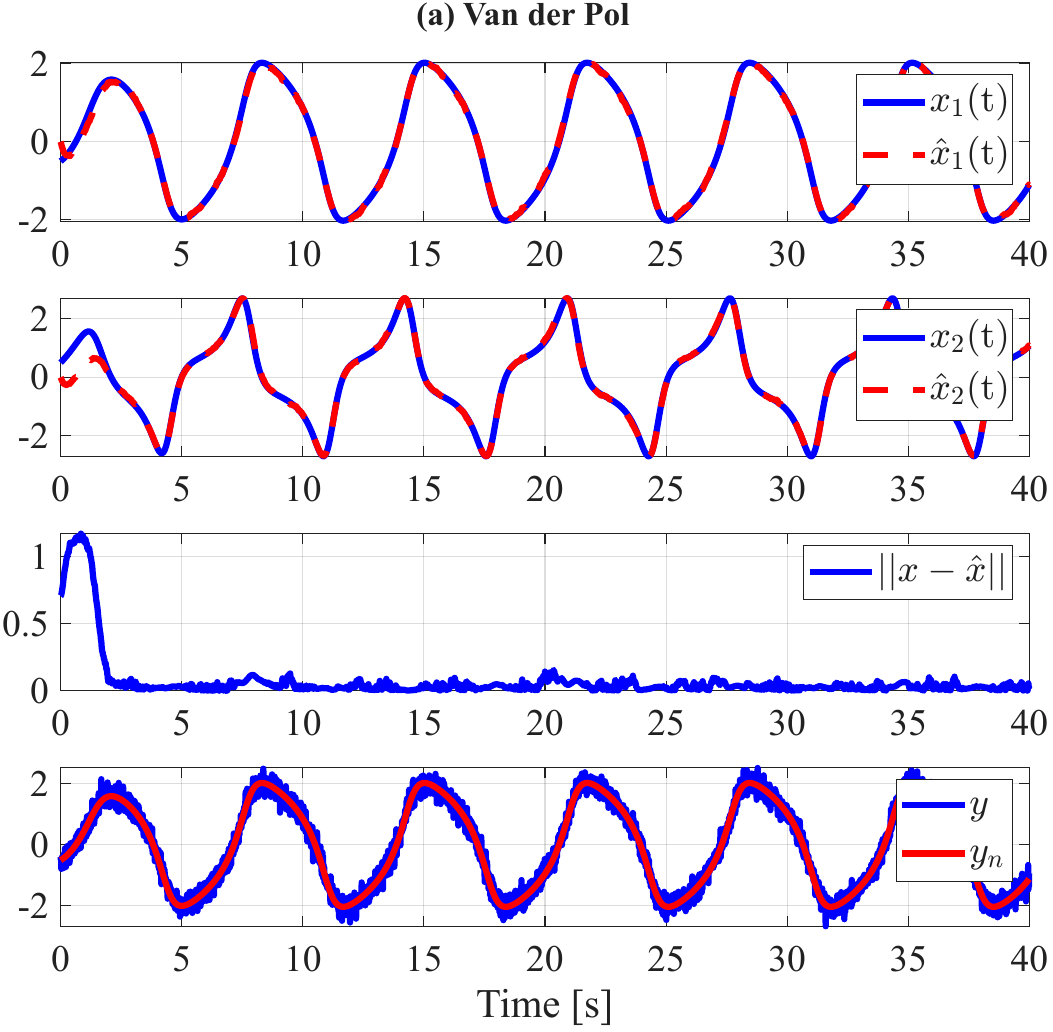} \includegraphics[width=0.4\linewidth]{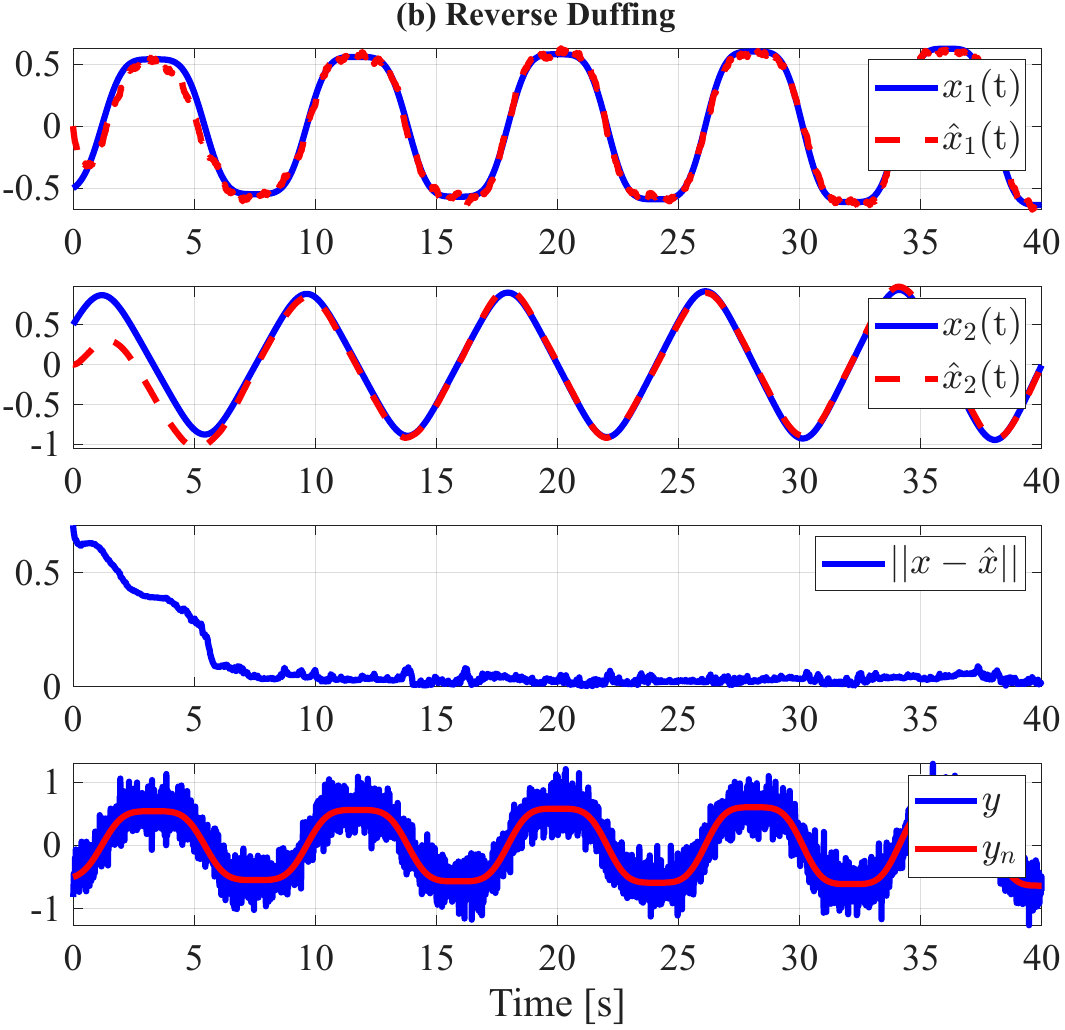} \\
    \vspace{0.5cm}
    \includegraphics[width=0.4\linewidth]{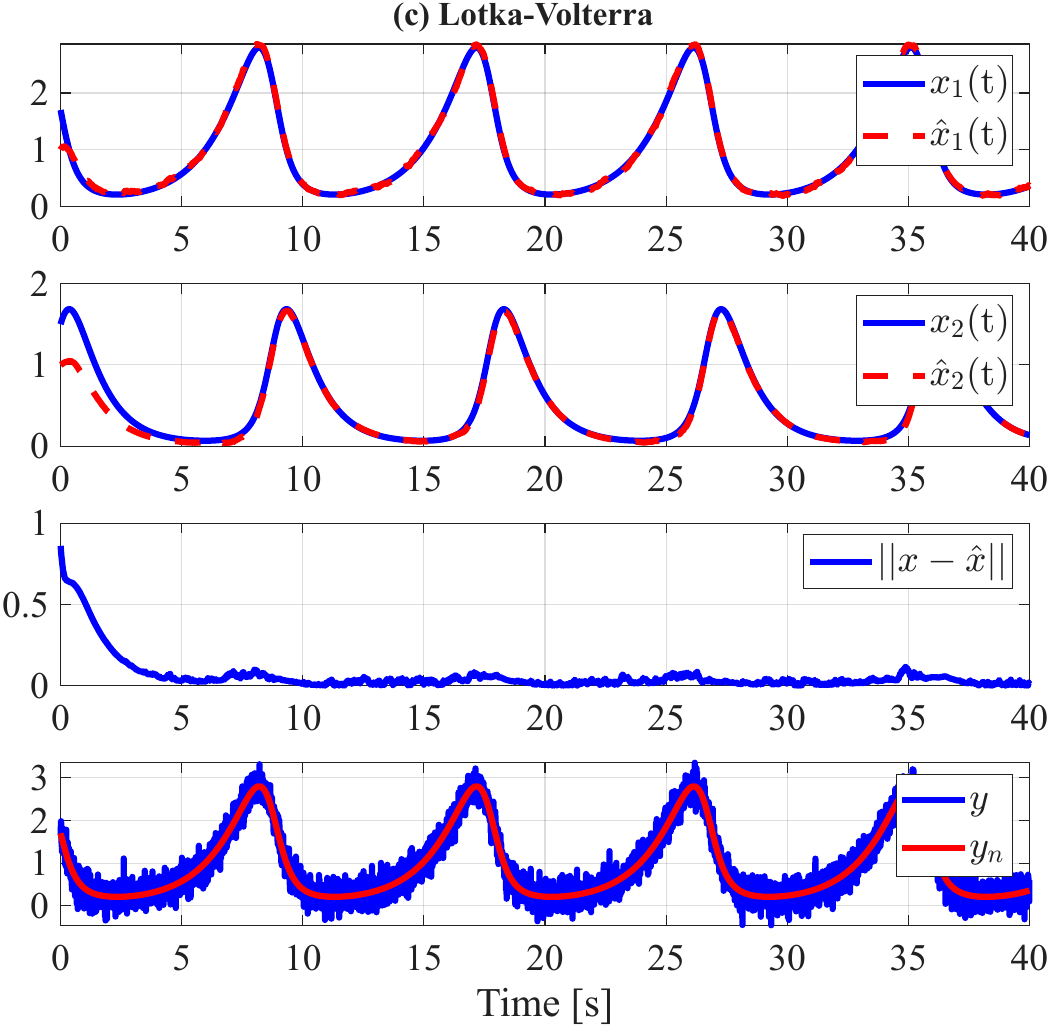}     \includegraphics[width=0.4\linewidth]{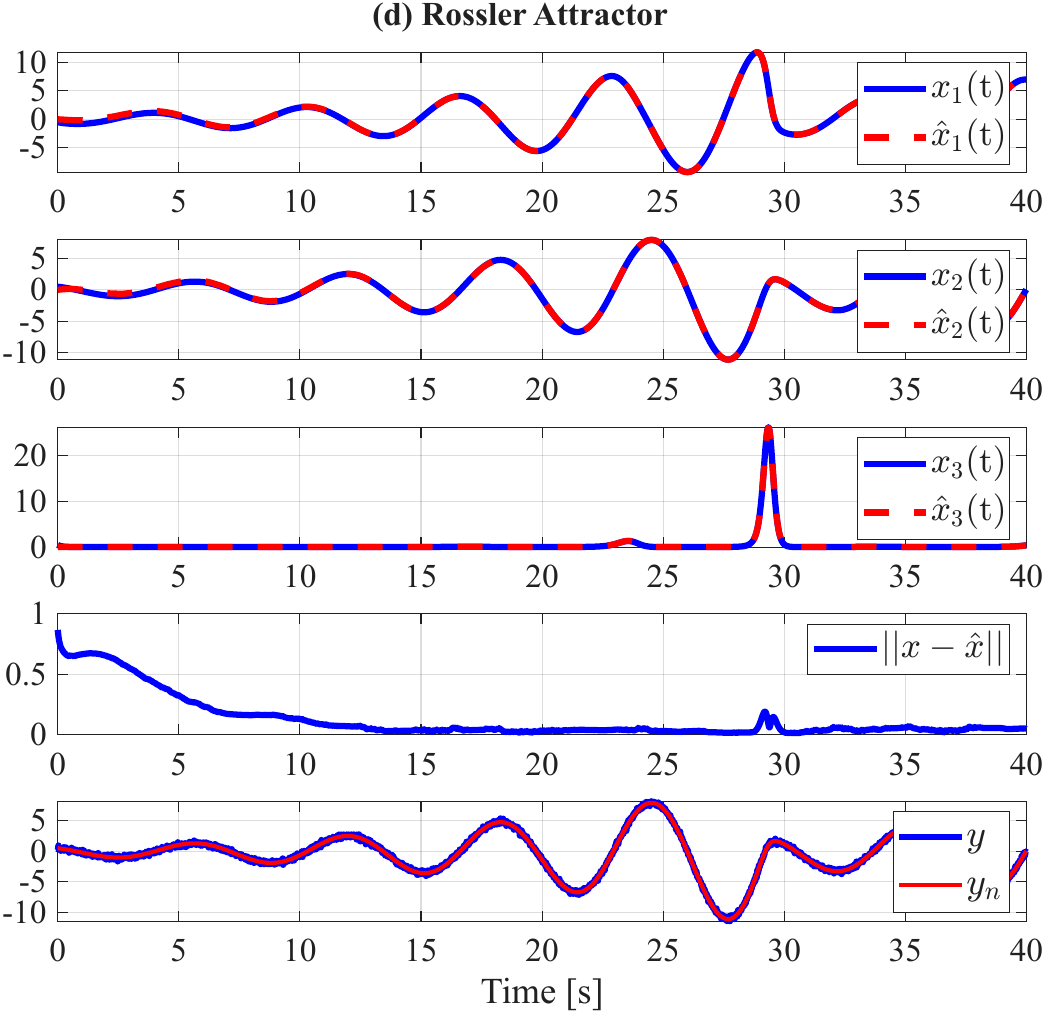} 
    \caption{Estimated states and estimation error of (a): Van der Pol oscillator, (b) Reverse Duffing, (c): Lotka-Volterra, and (d) Rossler Attractor, using the proposed robust learning-based contraction observer for $c=3$, under $20\%$ level of measurement noise. }
    \label{figestimation}
\end{figure}

\begin{figure}[t]
\centering
    \includegraphics[width=0.4\linewidth]{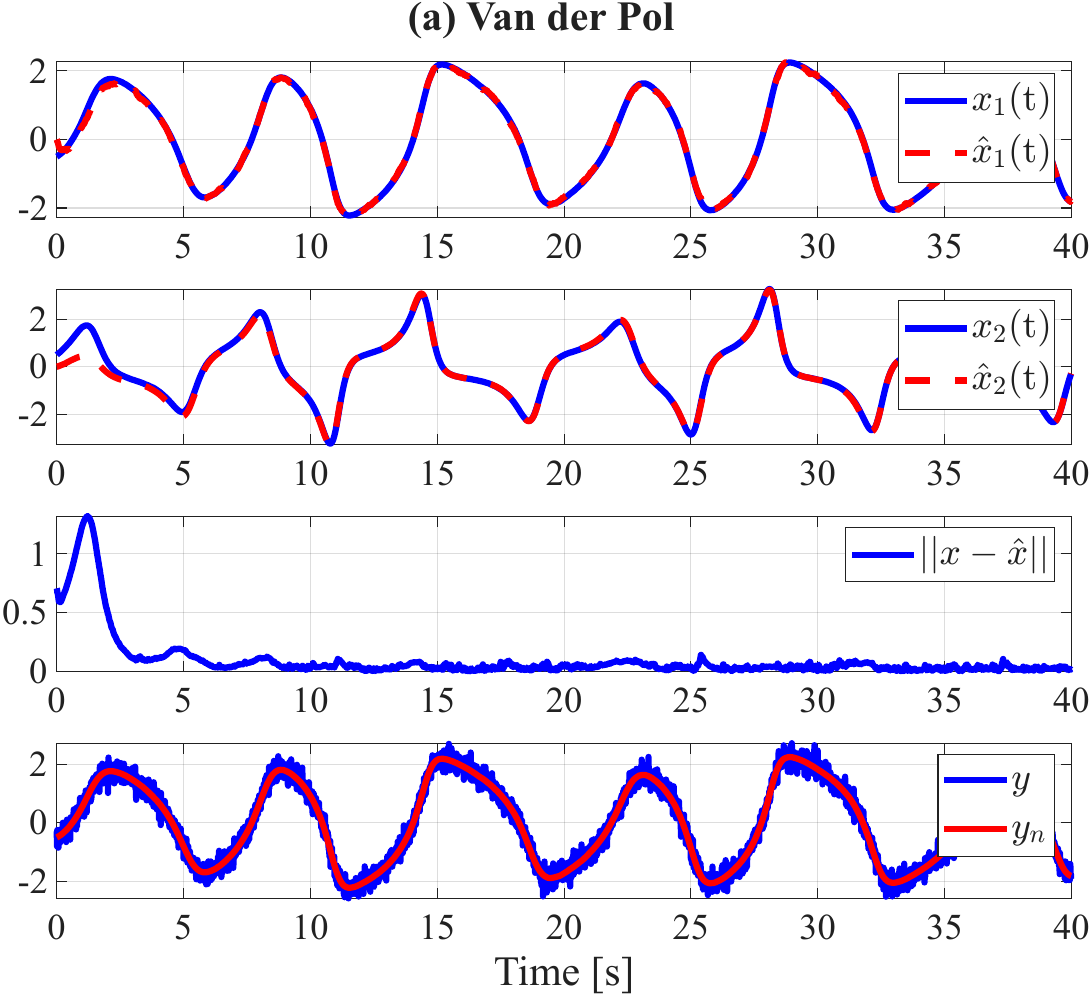} \includegraphics[width=0.4\linewidth]{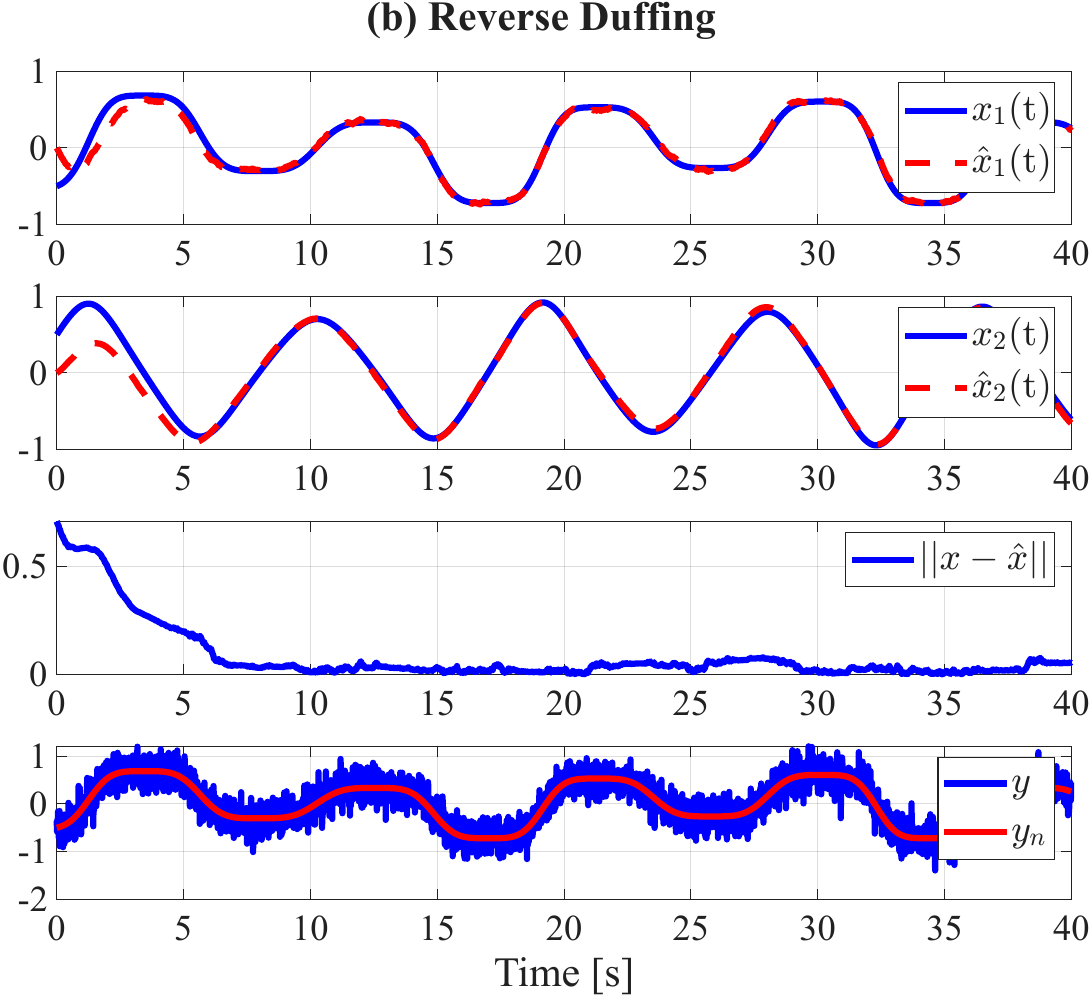} \\
    \vspace{0.5cm}
    \includegraphics[width=0.4\linewidth]{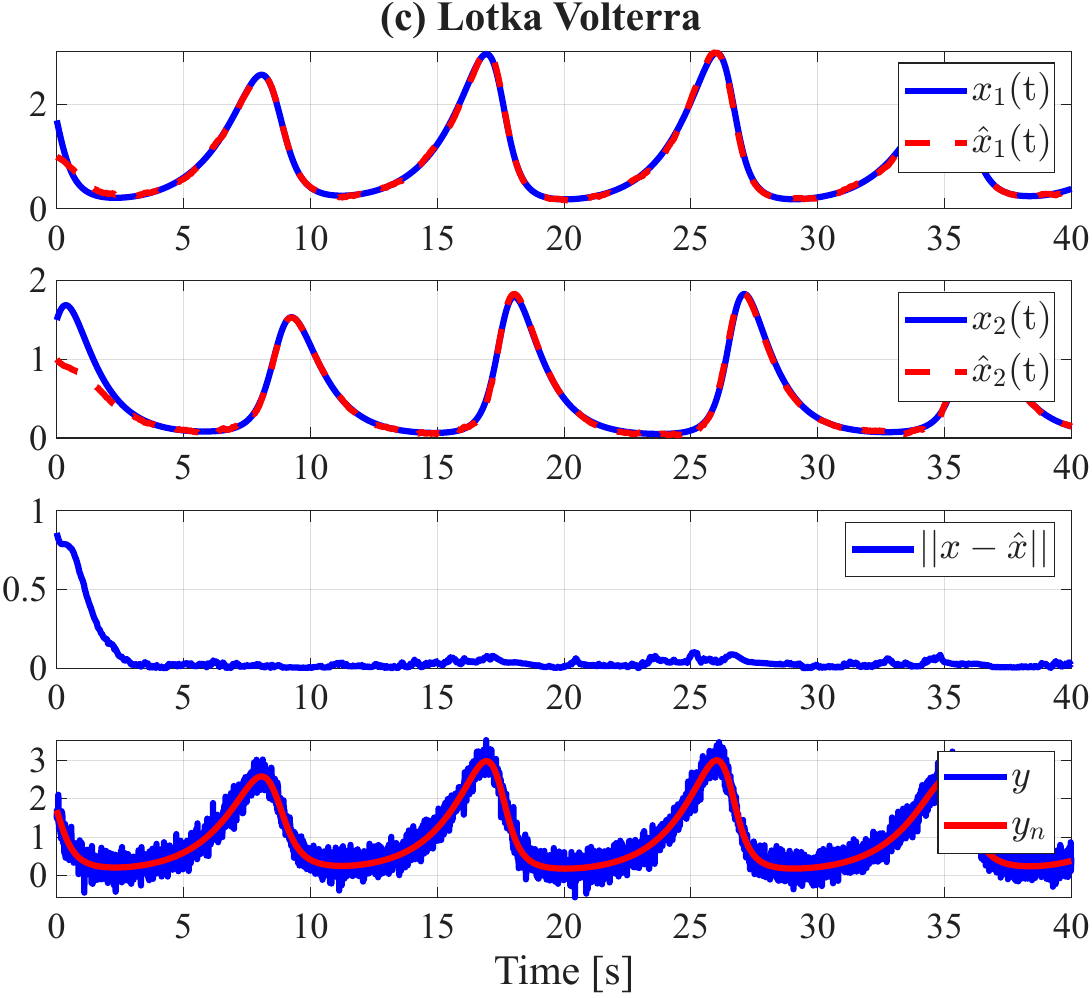}     \includegraphics[width=0.4\linewidth]{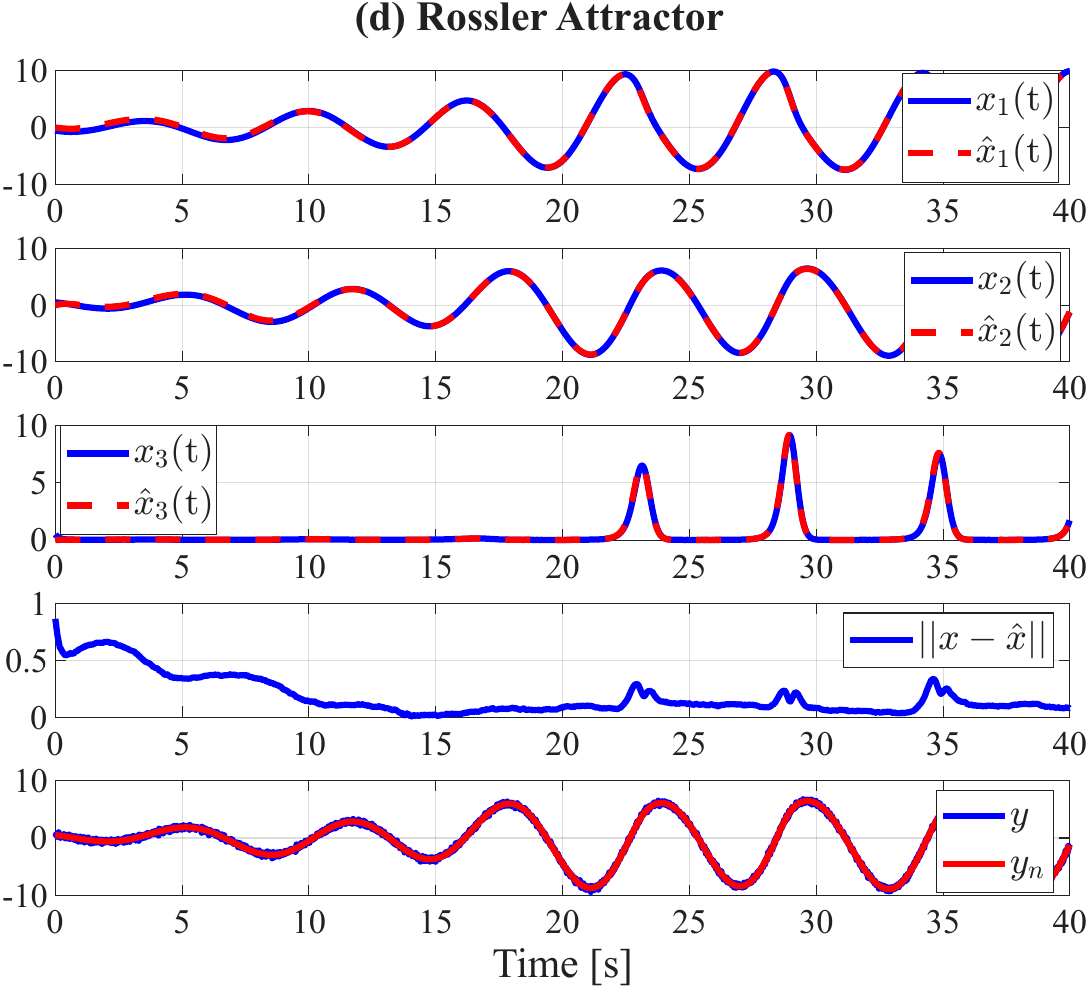} 
    \caption{Estimated states and estimation error of the forced  (a): Van der Pol oscillator, (b) Reverse Duffing, (c): Lotka-Volterra, and (d) Rossler Attractor, using the proposed robust learning-based contraction observer for $c=3$, under $20\%$ level of measurement noise. }
    \label{figestimationinput}
\end{figure}

\noindent 
FIGURE \ref{figestimation} illustrates the proposed observer's performance for the above unforced systems with a contraction rate of $c=3$ and a white Gaussian measurement noise with 0.2 standard deviation. Furthermore, a zero-mean white Gaussian process noise with 0.1 standard deviation was considered for each system.
The observer for all systems was initialized at zero, with the exception of the observer for the Lotka-Volterra system initialized at $[1, 1]$. The following initial conditions were considered for each system: $[0.5, 0.5]$ for \eqref{VdP} and \eqref{duffing}, $[1.7, 1.5]$ for \eqref{LV}, and $[0.5, 0.5, 0.5]$ for \eqref{Rossler}. Despite the presence of process and measurement noise, the estimated states converge to the true states of the system, and the estimation error converges exponentially to a small neighborhood of the origin determined by the level of measurement noise and the approximation error of the neural network, demonstrating, therefore, the effectiveness and robustness of the proposed approach.\\

\noindent
To evaluate the performance of the proposed observer for non-autonomous nonlinear systems, we consider the following inputs for each system: $u(t)=0.5\sin{(0.5t)}$ for the Van der Pol Oscillator, $u(t)=0.1\sin{(0.3t)}$ for the Reverse Duffing, $u_1(t)=0.01\sin{(0.3t)}$ and $u_2(t)=0.01\cos{(0.2t)}$ for the Lotka Volterra system, and $u(t)=0.5\sin{(0.5t)}$ for the Rossler Attractor.   
FIGURE \ref{figestimationinput} shows the estimated states of the non-autonomous systems using the proposed learning-based contraction observer with the static correction term and contraction metric under similar process and measurement noise as in FIGURE \ref{figestimation}. One can see that the 
Learning-based contraction observer learned from the autonomous version of systems \eqref{VdP}-\eqref{Rossler} can accurately estimate the states of the forced systems even in the presence of measurement noise, and the estimation error converges to the vicinity of the origin determined by the level of noise and the learning errors.

\begin{figure}[!t] 
        \centering
      \begin{overpic}[width=0.24\linewidth]{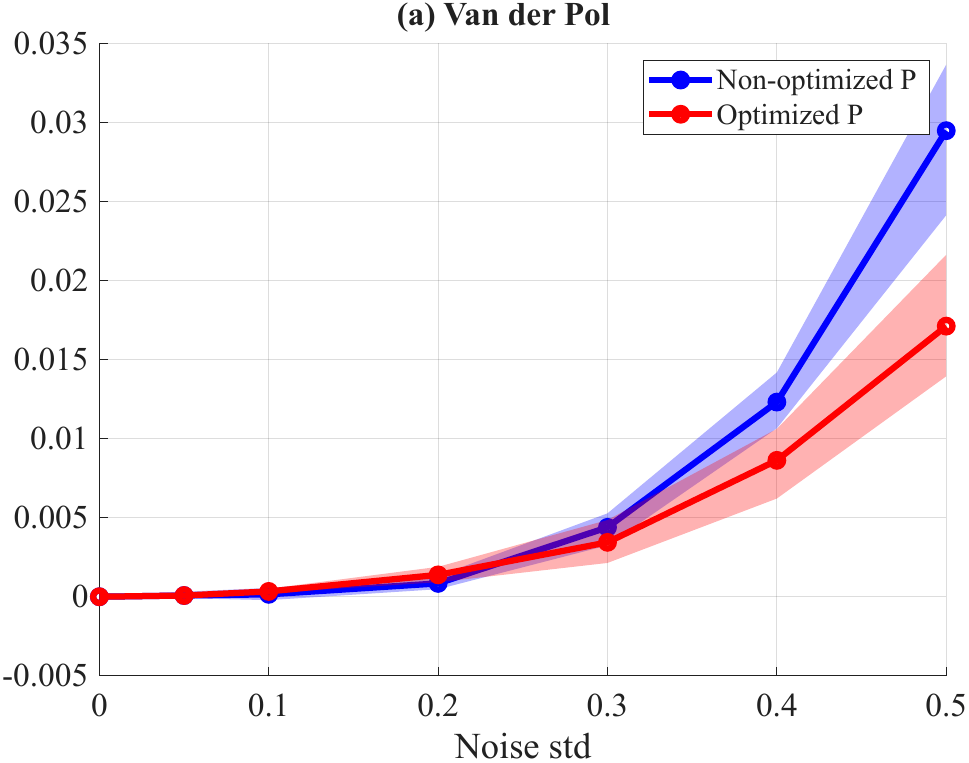}
            \end{overpic}  
      \begin{overpic}[width=0.24\linewidth]{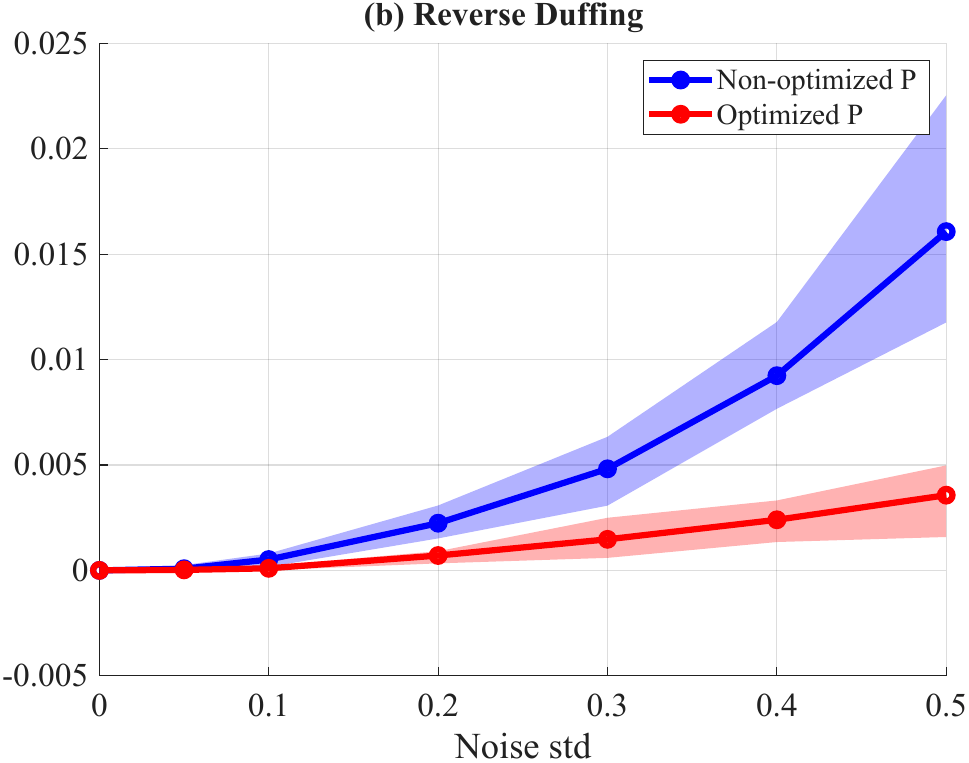}
            \end{overpic}       
     \begin{overpic}[width=0.23\linewidth]{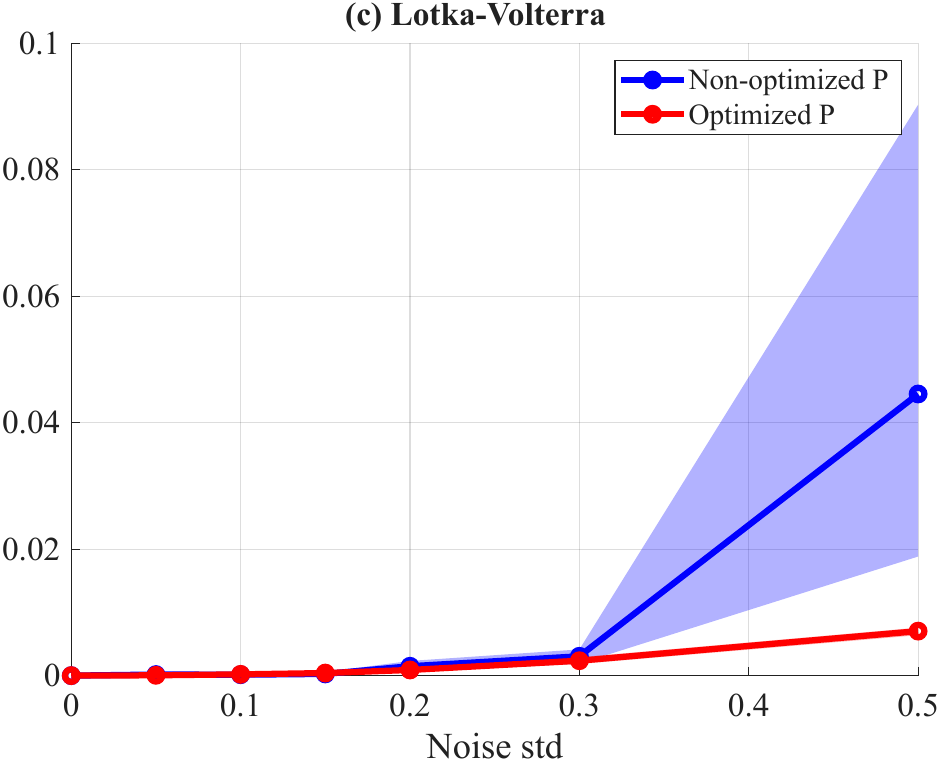}
     \put(19,35){\begin{overpic}[scale=0.12]{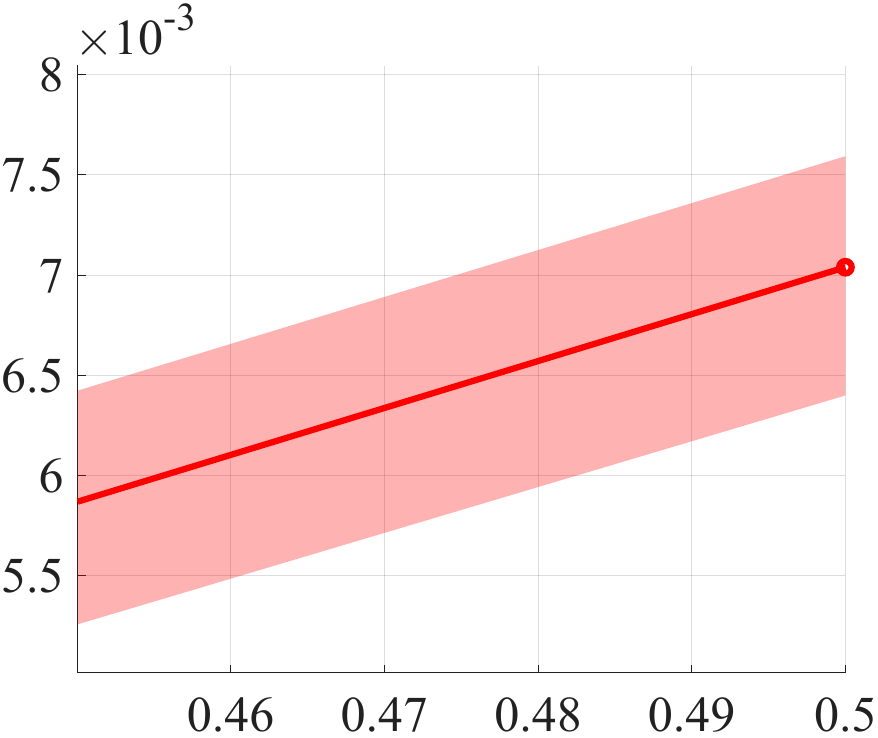}
      \end{overpic}}
     \end{overpic}
    \begin{overpic}[width=0.23\linewidth]{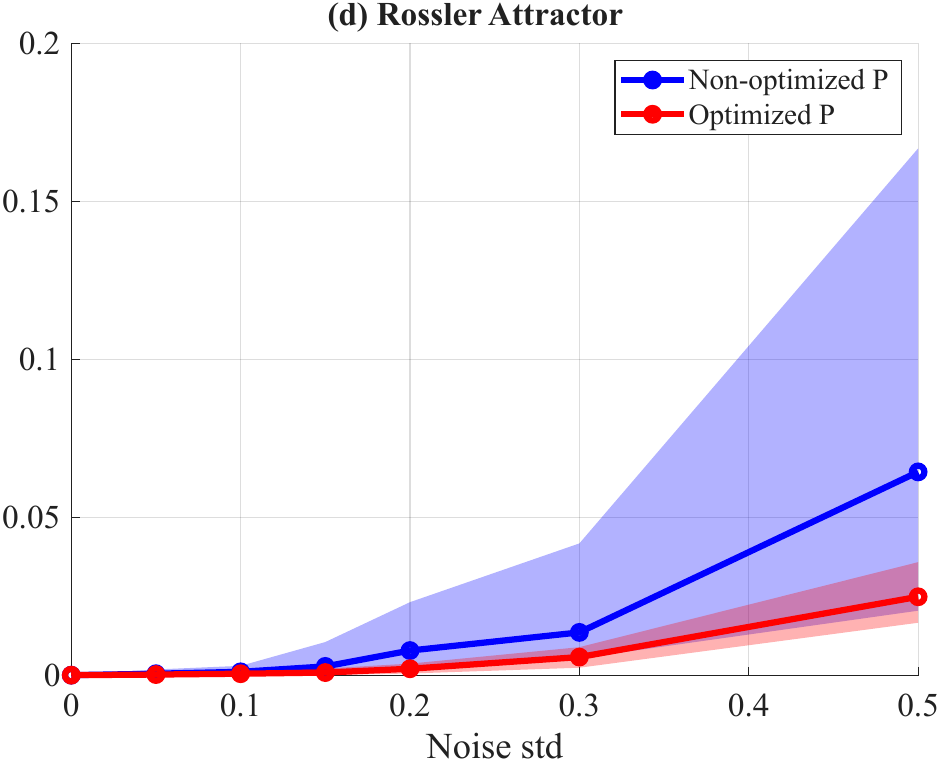}
     \end{overpic}
      \caption{The mean and standard deviation of $\Delta \text{MSE}$ over 10 different simulation scenarios for different levels of Measurement of noise in the case of optimized and non-optimized contraction metric with $c=3$.}\label{fig:mseResults}
\end{figure}

\begin{table}[h!]
\centering
\caption{Condition number of the non-optimized and optimized contraction metric P for all numerical examples.}
\label{tab:condP}

\begin{tabular}{|l| p{1cm} |p{2.5cm}| p{2.5cm}| }
\hline
{System} & {$c$} & {Non-optimized P} & {Optimized P}  \\
\hline

 & 2   & 235.7972  & 1.0124 \\
 & 2.5 & 3.8265 & 1.0086  \\
Van der Pol  & 3   & 86.5393 & 1.0886\\
 & 4   & 1.8175 & 1.3517  \\
 & 5   & 173.4178 & 1.6903 \\
\hline
 & 2   & 9.7124  & 1.0001  \\
 & 2.5 & 5749.6904 & 1.0177\\
Reverse Duffing & 3   & 15.8951 & 1.0027 \\
 & 4   & 766.9978 & 1.0054  \\
 & 5   & 2.0012 & 1.0077 \\
\hline

 & 2   & 2.2339  &1.2596 \\
 & 2.5 & 2.6063 & 1.6439  \\
Lotka–Volterra & 3   & 481.4781 & 1.7147 \\
 & 4   & 38.9651 & 1.4899  \\
 & 5   & 2.0622 & 1.4813  \\

 \hline

 & 2   & 15.3967  & 1.5039 \\
 & 2.5 & 37.7093 & 1.4587  \\
Rossler attractor & 3   & 22.3611 &  1.7375 \\
 & 4   & 53.3596 & 1.7212  \\
 & 5   & 107.3236 & 1.7746  \\
\hline
\end{tabular}
\end{table}

\noindent
To quantitatively assess the observer's robustness, we varied the level of measurement noise for a contraction rate $c=3$ for both the optimized and non-optimized contraction, i.e., by using the loss in \eqref{lossRobust} and \eqref{eqoptimize}, respectively. The measurement noise was modeled as zero-mean Gaussian white noise with a standard deviation $v_{\text{STD}} \in \{0, 0.05, 0.1,0.2, 0.3, 0.5\}$. To properly visualize the estimation errors under these scenarios independently from the learning errors, we computed the change in mean squared error $\Delta \text{MSE} (v_{\text{STD}})$ relative to the no-noise baseline
$$
\Delta \text{MSE}(v_{\text{STD}})= \text{MSE}(v_{\text{STD}})-\text{MSE}(0).
$$

\noindent
FIGURE \ref{fig:mseResults} presents the change in mean squared error $\Delta \text{MSE}(v_{\text{STD}})$ relative to the $0\%$ level of noise case. Overall, the results for both the optimized and non-optimized contraction metric indicate very little increase in the MSE in the presence of increasing levels of measurement noise. Furthermore, FIGURE \ref{fig:mseResults} highlights the significant impact of optimizing the eigenvalues of the contraction metric during the learning process. One can see that the average $\Delta \text{MSE}(v_{\text{STD}})$ over the different simulation scenarios for the four systems considered is significantly lower when the eigenvalues of the contraction metric are optimized, especially for high levels of measurement noise. Furthermore, it can be seen in FIGURE \ref{fig:mseResults} that optimizing the contraction metric notably reduces the standard deviation of $\Delta \text{MSE}(v_{\text{STD}})$.     \\
Moreover, TABLE \ref{tab:condP} illustrates the condition number of the non-optimized and optimized contraction metric P obtained using the loss functions in \eqref{eqoptimize} and \eqref{lossRobust}, respectively, for different contraction rates. It can be seen that the condition number of the contraction metric obtained through \eqref{eqoptimize} is arbitrary and can vary from small to large values, while the condition numbers of the optimized P obtained by minimizing \eqref{lossRobust} are all closer to one, leading to better robustness properties of the observer, as illustrated in FIGURE \ref{fig:mseResults}.

\section{Conclusion}\label{sec:conclusion}
This paper addresses a long-standing challenge in contraction-based nonlinear observer design by proposing an unsupervised learning-based approach for simultaneously designing the nonlinear correction term and the contraction metric. The method relies on a physics-informed neural network that embeds the contraction conditions into the learning process required to guarantee exponential convergence of the state estimation. Additionally, a learning-based contraction observer for non-autonomous systems is proposed using a static correction term and contraction metric, avoiding time-dependent PINNs that limit generalization beyond the training horizon and compromise online estimation. Computable bounds on the learning approximation error of the proposed observer were derived. The impact of the learning errors, measurement noise, and process noise on the estimation error was analyzed, leading to conditions ensuring exponential input-to-state stability.\\
Based on the robustness analysis, a robust learning-based contraction observer was proposed by optimizing the spectrum of the contraction metric during training. Numerical simulations demonstrated strong performance for both autonomous and non-autonomous systems under process and measurement noise, with effective noise rejection across a range of noise magnitudes. Future work will focus on refining learning error bounds and relaxing the compactness assumption on the state set to extend the approach to systems with unbounded state trajectories. \\

\bmsection*{Acknowledgments}
\noindent Research reported in this publication was supported by King Abdullah University of Science and Technology (KAUST) with the  Base Research Fund (BAS/1/1627-01-01) and (BAS/1/1665-01-01), and the National Institute for Research in Digital Science and Technology (INRIA). The authors would also like to thank  Ibrahima N’Doye, Filippo Fabiani, and Renzo Caballero for the insightful discussions.

\bmsection*{Financial disclosure}

None reported.

\bmsection*{Conflict of interest}

The authors declare no potential conflict of interest.

\bibliography{wileyNJD-MPS}

\section*{Appendix A: Technical Definitions}

\begin{definition}[Axis-Aligned Bounding Orthotope]
\label{defA1}
Let $\Omega \subset \mathbb{R}^{d}$ be a non empty and compact set. Its smallest enclosing Axis-Aligned Bounding Orthotope (AABO) is
\[
  \mathcal{R} = \prod_{i=1}^{d}\left[\, \min_{z \in \Omega} z_i,\ \max_{z \in \Omega} z_i \,\right],
\]
with side lengths and center, for $i = 1, \dots, d$,
\[
  s_i = \max_{z \in \Omega} z_i - \min_{z \in \Omega} z_i,
  \qquad
  (c_{\mathcal{R}})_i = \frac{\max_{z \in \Omega} z_i + \min_{z \in \Omega} z_i}{2}.
\]
\end{definition}

\begin{definition}[N-Uniform Hypercubic Cover]
\label{defA2}
For $q > 0$ and $c \in \mathbb{R}^{d}$, the hypercube of side $q$ and center $c$ is
\[
  C(q, c) = \prod_{i=1}^{d}\left[\, c_i - \tfrac{q}{2},\ c_i + \tfrac{q}{2} \,\right].
\]
Given $\mathcal{R}$ as in Definition~1 and $N \in \mathbb{N}$, set
\[
  q = \left( \frac{\prod_{i=1}^{d} s_i}{N} \right)^{\!1/d},
  \qquad
  n_i = \left\lceil \frac{s_i}{q} \right\rceil,
  \qquad
  \bar{n} = \prod_{i=1}^{d} n_i.
\]
Denote by $\mathcal{R}_N$ the N-Uniform Hypercubic Cover (N-UHC) enclosing
$\mathcal{R}$ and centered at $c_{\mathcal{R}}$, with side lengths
$\bar{s}_i = n_i\, q$, defined as
\[
  \mathcal{R}_N = \bigcup_{k \in K} C(q, c_k),
  \qquad
  K = \{1, \dots, n_1\} \times \cdots \times \{1, \dots, n_d\},
  \quad |K| = \bar{n},
\]
where each elementary hypercube center $c_k$ is given component-wise, for $i = 1, \dots, d$, by
\[
  (c_k)_i = (c_{\mathcal{R}})_i + q\left( k_i - \frac{n_i + 1}{2} \right).
\]

\end{definition}

\begin{definition}[Equidistant deterministic sample set]
\label{defA3}
Let $\mathcal{R}$ be the smallest AABO  enclosing $\Omega$, and $\mathcal{R}_N$ its N-UHC, as in Definition 1 and 2. The
\emph{equidistant deterministic sample set} of $\Omega$ is the set of the elementary hypercube
centers of $\mathcal{R}_N$ contained in $\Omega$, i.e., 
\[
  \mathcal{D} = \{\, c_k : k \in K| \ c_k \in \Omega \,\}.
\]
Its points lie on a uniform Cartesian grid of resolution 
$q=\left(\frac{\mu(\mathcal{R})}{N}\right)^{1/d}$. \\

\end{definition}

\begin{proposition}
    Let $\mathcal{D}$ be the equidistant deterministic sample set of the set $\Omega$ with resolution $q=\left(\frac{\mu(\mathcal{R})}{N}\right)^{1/d}$, for a given $N$, and $\mathcal{R}$ being the smallest AABO enclosing $\Omega$. Then $\forall z\in \Omega$ and its closest neighbor $z^{[j]} \in \mathcal{D}$ satisfy 
    \begin{enumerate}[(i).]
        \item $\|z-z^{[j]}\|_{\infty}\leqslant q.$\\
        \item  Furthermore, if $\Omega$ is an orthotope, then $|\mathcal{D}|=\bar{n}$ and $\|z-z^{[j]}\|_{\infty}\leqslant \frac{q}{2}.$
    \end{enumerate}
    
\end{proposition}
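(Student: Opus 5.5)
The plan is to prove the two items separately, working one coordinate at a time and using only the grid structure of Definitions~\ref{defA1}--\ref{defA3}.

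\emph{Item (ii).} Assume $\Omega$ is an orthotope, so $\Omega=\mathcal{R}$. For each coordinate $i$, the centers $(c_k)_i=(c_{\mathcal{R}})_i+q(k_i-\tfrac{n_i+1}{2})$ form a symmetric grid of spacing $q$ around $(c_{\mathcal{R}})_i$. Its extreme points are at $(c_{\mathcal{R}})_i\pm q\tfrac{n_i-1}{2}$.

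Since $n_i=\lceil s_i/q\rceil$, we have $q(n_i-1)<s_i$, so every center lies strictly inside $[\min_{\Omega} z_i,\max_{\Omega} z_i]$. Hence all $\bar n$ centers belong to $\Omega$, which gives $|\mathcal{D}|=\bar n$.

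For the distance bound, fix $z\in\mathcal{R}$ and treat each coordinate independently. The hypercubes $C(q,c_k)$ tile the cover $\mathcal{R}_N$, which contains $\mathcal{R}$. So $z$ lies in some elementary cube $C(q,c_k)$, and therefore $|z_i-(c_k)_i|\le q/2$ for every $i$. Because $c_k\in\mathcal{D}$, the closest neighbour $z^{[j]}$ satisfies $\|z-z^{[j]}\|_\infty\le\|z-c_k\|_\infty\le q/2$.

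\emph{Item (i).} Here $\Omega$ is only assumed compact; the proof will also use convexity of $\mathcal{X}\times\mathcal{Y}$ as in the paper's setting. Take $z\in\Omega$. As before, $z$ lies in some cube $C(q,c_k)$ of the cover.
\begin{itemize}
\item If $c_k\in\Omega$, the bound is $q/2\le q$ as in (ii).
\item Otherwise $c_k\notin\Omega$, and we need some other center of $\Omega$ within $\ell_\infty$-distance $q$ of $z$. The candidates are the $3^d$ centers of the cubes adjacent to $C(q,c_k)$, all of which lie in the $\ell_\infty$-ball $\{w:\|w-z\|_\infty\le 3q/2\}$.
\end{itemize}

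\textbf{Main obstacle.} The second case is where the argument can fail, and in general (i) is not true. Take a thin sliver of $\Omega$ passing between grid centers, narrower than $q$ in one direction; then no center lies in $\Omega$ near $z$, and $\mathcal{D}$ can even be empty. So a hypothesis must be added, and I would first try an interior-cone or thickness condition. Concretely, assume every $z\in\Omega$ admits a cube of side $q$ with $z\in C\subseteq\Omega$; for convex $\Omega$ this is plausible once $q$ is small relative to the inradius. Such a cube $C$ of side $q$ contains a grid point $c_m$, since grid points occur every $q$ in each coordinate. Then $c_m\in C\subseteq\Omega$, so $c_m\in\mathcal{D}$, and since both $z$ and $c_m$ lie in $C$, $\|z-c_m\|_\infty\le q$. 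This yields (i).

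I would state this thickness requirement explicitly as the condition under which (i) holds, noting that it is satisfied for $N$ large, i.e.\ $q$ small, whenever $\Omega$ is a convex body with nonempty interior.
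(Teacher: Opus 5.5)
Your item (ii) is the paper's argument. The centers form a symmetric grid with extreme offsets $q\frac{n_i-1}{2}$, and $n_i=\lceil s_i/q\rceil$ gives $q(n_i-1)<s_i$, so every center lies in $\Omega=\mathcal{R}$. Your strict inequality is what actually justifies the paper's ``strictly inside''. Your tiling argument via $\mathcal{R}\subseteq\mathcal{R}_N$ for the $q/2$ bound supplies a step the paper only asserts.

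On item (i) your diagnosis is correct, and the defect is in the statement, not your plan. The paper's proof of (i) only asserts that when the cube containing $z$ has its center in $\mathcal{R}_N\setminus\mathcal{D}$, some point of $\mathcal{D}$ is within $q$ ``by construction''. That is exactly your second case, and it is false in general. Your repair under the thickness hypothesis is sound. The one detail to add is that the lattice point found in $C$ belongs to the finite index set $K$. This holds because $C\subseteq\mathcal{R}_N$ and the extreme centers lie at distance $q/2$ from the faces of $\mathcal{R}_N$.

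The genuine error is your closing claim that the thickness hypothesis holds for every convex body once $q$ is small. It fails for every $q>0$ whenever some $z\in\Omega$ is the unique maximizer (or minimizer) of a coordinate, e.g.\ $z=(R,0)$ on a disk of radius $R$. Since $\{x_1=z_1\}$ supports any cube $C$ with $z\in C\subseteq\Omega$ at $z$, the whole face $C\cap\{x_1=z_1\}$, of side $q$, would have to lie in $\Omega\cap\{x_1=z_1\}=\{z\}$.

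Worse, (i) itself fails for convex bodies at every resolution. Take the triangle with vertices $(0,0)$, $(1,0)$, $(1,0.1)$ and $N=10j^2$. Then $q=1/(10j)$ and the centers are $\bigl(q(k_1-\tfrac12),q(k_2-\tfrac12)\bigr)$. A center lies in $\Omega$ only if $k_1-\tfrac12\ge 10(k_2-\tfrac12)$. So the point of $\mathcal{D}$ nearest to the vertex $(0,0)$ is $(5.5q,0.5q)$, at $\ell_\infty$-distance $5.5q$ for every $j$.

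Hence convexity, which you announce but never use, cannot restore the constant $q$. What it does give, combined with your lemma, is a geometry-dependent bound. Suppose $\Omega$ is convex and contains an axis-aligned cube of half-side $r\ge q/2$ centered at $c_0$. Then the cube of side $q$ centered at $z+\frac{q}{2r}(c_0-z)$ lies in $\Omega$ and contains a point of $\mathcal{D}$. Using $\|z-c_0\|_\infty\le\max_i s_i-r$, you get $\|z-z^{[j]}\|_\infty\le \frac{q\max_i s_i}{2r}$.

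So either state (i) under your hypothesis and drop the convexity remark, or replace $q$ by such a constant. In all of the paper's examples $\mathcal{X}\times\mathcal{Y}$ is an orthotope, where your (ii) already gives $q/2$.
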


\begin{proof}
The proof of \textit{(i)} is trivial by construction of $\mathcal{D}$ and $\mathcal{R}_N$. To provide some insight, in property (i), if the closest neighbor $z^{[j]}$ to a given $z\in \Omega$ is in $\mathcal{D}$, then the distance $\|z-z^{[j]}\|_{\infty}\leqslant \frac{q}{2}.$ However, for any $\Omega$, we might be in the case where the closest neighbor to $z$ is in $\mathcal{R}_N \setminus \mathcal{D}$. Therefore, in this case, by construction in Definitions 2 and 3, the closest neighbor to $z$ in $\mathcal{D}$ satisfies $\|z-z^{[j]}\|_{\infty}\leqslant q.$
\\

\noindent
The property \textit{(ii)} means that when $\Omega$ is an orthotope, then all the centers of the elementary hypercubes of $\mathcal{R}_N$ are in $\Omega$ and therefore in $\mathcal{D}$, leading to a tighter distance than (i). To prove this, it suffices to show that the farthest elementary hypercube centers from $c_{\mathcal{R}}$ are in  $\Omega$.  \\

\noindent
If $\Omega$ is an orthotope then $\mathcal{R}=\Omega$. From Definition~2, the centers coordinate along axis $i$ are
$(c_k)_i = (c_{\mathcal{R}})_i + q\big(k_i - \tfrac{n_i+1}{2}\big)$ with
$k_i \in \{1,\dots,n_i\}$ and $i=1,....,d$, so the coordinates of the farthest hypercube centers along axis $i$  
$(c_{\mathcal{R}})_i$ are attained at $k_i = 1$ or $k_i = n_i$ and satisfy 
\begin{align}
      \big|(c_k)_i - (c_{\mathcal{R}})_i\big|
  &\le q\,\frac{n_i - 1}{2}. \notag
\end{align}

 \noindent 
Given that  $n_i = \lceil \frac{s_i}{q} \rceil$, this yields
\begin{align}
      \big|(c_k)_i - (c_{\mathcal{R}})_i\big|
  &\le \frac{s_i}{2}. \notag
\end{align}
\noindent
Hence, all $\bar{n}$ centers
$c_k$ lie strictly inside $\Omega$,  giving $\lvert \mathcal{D} \rvert = \bar{n}$.

\end{proof}





\end{document}